\newif\ifsocg
\newtheorem{problem}{Problem}
    \newtheorem{theorem}{Theorem}[section]
    \newtheorem{lemma}[theorem]{Lemma}
\newcommand{\LL}{\log\log}
\renewcommand{\H}{\ensuremath{\mathcal{H}}\xspace}
\newcommand{\A}{\mathcal{A}}
\renewcommand{\P}{\mathcal{P}}
\newcommand{\C}{\mathscr{C}}
\renewcommand{\S}{\mathcal{S}}
\newcommand{\R}{\mathbb{R}}
\newcommand{\E}{\mathbb{E}}
\newtheorem{ob}{Observation}
\newtheorem{prob}{Problem}
\newcommand{\G}{\ensuremath{\mathbf{G}}\xspace}
\newcommand{\bH}{\ensuremath{\mathbf{H}}\xspace}
\newcommand{\N}{\ensuremath{\mathcal{N}}}
\newcommand{\refob}[1]{Observation~\ref{ob:#1}}
\newcommand{\reflem}[1]{Lemma~\ref{lem:#1}}
\newcommand{\refthm}[1]{Theorem~\ref{thm:#1}}
\newcommand{\mypara}[1]{\subparagraph{#1}}
\title{Convexity Helps Iterated Search in 3D}
\author{Peyman Afshani}{Department of Computer Science, Aarhus University}{peyman@cs.au.dk}{}{[Supported by DFF (Danmarks Frie Forskningsfond) of Danish Council for Independent Research under grant ID 10.46540/3103-00334B]}
\author{Yakov Nekrich}{Department of Computer Science, Michigan Technological University}{yakov@mtu.edu}{}{}
\author{Frank Staals}{Department of Information and Computing Sciences, Utrecht University, The Netherlands}{f.staals@uu.nl}{0009-0004-8522-1351}{}
\authorrunning{P. Afshani, Y. Nekrich, and F. Staals}
\keywords{Data structures, range searching} 
\begin{document}

\maketitle

\begin{abstract}
    Inspired by the classical fractional cascading technique~\cite{Chazelle.Guibas.fractional.I,Chazelle.Guibas.fractional.II},
    we introduce new techniques to speed up the following type of iterated search in 3D:
    The input is a graph $\G$ with bounded degree together with a set $H_v$ of 3D hyperplanes associated with every
    vertex of $v$ of $\G$.
    The goal is to store the input such that given a query point $q\in \R^3$ and a connected subgraph
    $\bH\subset \G$, we can decide if $q$ is below or above the lower envelope of $H_v$ for every $v\in \bH$.
    We show that using linear space, it is possible to answer queries in  roughly $O(\log n + |\bH|\sqrt{\log n})$ time which improves
    trivial bound of $O(|\bH|\log n)$ obtained by using planar point location data structures.
    Our data structure can in fact answer more general queries (it combines with shallow
    cuttings) and it even works when 
    $\bH$ is given one vertex at a time.
    We show that this has a number of new applications and in particular,  we give improved solutions to a 
    set of natural 
    data structure problems that up to our knowledge had not seen any improvements.

    We believe this is a very surprising result because obtaining similar results for the planar point
    location problem was known to be impossible~\cite{chazelle2004lower}.

\end{abstract}

\section{Introduction}

The idea of ``iterated searching'' was introduced with  the classical fractional cascading technique~\cite{Chazelle.Guibas.fractional.I,Chazelle.Guibas.fractional.II} and it can be described as follows:
consider an abstract data structure problem where given an input $I$, we need to preprocess it 
to answer a given $q$ more efficiently. 
Now imagine that instead of having one input set, $I$, we have multiple independent 
input sets $I_1, I_2, \cdots$ and at the query
time, we would like to answer the same query $q$ on \textit{some} of the data sets. 
This is typically modeled as follows: the input contains a catalog graph $\G$ and each
vertex $v \in \G$ is associated with a different input set $I_v$. 
The goal is to preprocess the input, that is $\G$ and the sets $I_v$, so that given a query $q$ 
and a connected subgraph $\bH \subset \G$ we can find the answer to querying $q$ on 
every data set $I_v$ for each $v \in \bH$.
The trivial solution is to 
have a separate data structure for every $I_v$ and to query them
individually. 
Thus, the main research interest is to improve upon this trivial solution. 

\subsection{Fractional Cascading}

Chazelle and Guibas  improved upon the aforementioned ``trivial solution''
for the predecessor search problem.
To be more precise, for each $v \in\G$, $I_v$
 is an ordered list of values and the query $q$ is also a value and the 
goal is to find the predecessor of $q$ in $I_v$ for every $v\in \bH$. 
Chazelle and Guisbas showed how to build  a data structure that uses linear space to  answer
such queries in $O(\log n + |\bH|)$ time~\cite{Chazelle.Guibas.fractional.I,Chazelle.Guibas.fractional.II},
essentially, reducing the cost of a predecessor search to $O(1)$, after an initial investment of $O(\log n)$ time.
This is clearly optimal and since its introduction has found plenty of applications.

The origin of this technique can be traced back to 1982 and to a paper by 
 Vaishnai and Wood~\cite{Vaishnavi} and  also
the notion of ``downpointers'' from  a 1985 paper by Willard~\cite{Willard85} 
but the seminal work in this area was the aforementioned work by Chazelle and Guibas
\cite{Chazelle.Guibas.fractional.I,Chazelle.Guibas.fractional.II}.

\subsection{2D Fractional Cascading}
Due to its numerous applications, 
Chazelle and Liu \cite{chazelle2004lower} studied ``2D fractional cascading'' which is the iterated
2D point location problem. 
In particular, 
each vertex $v \in \G$ is associated with a planar subdivision, the query
$q$ is a point in $\R^2$,  and the goal is to output for every $v \in \bH$
the cell of the subdivion of $v$ that contains $q$. 
Chazelle and Liu showed that in this case all hope is lost because
there is an almost quadratic space lower bound for data structures
with $O(\log n+ |\bH|)$ query time. 
Actually, their techniques prove  an
$\Omega(n^{2-\varepsilon})$ space lower bound for data structures with 
$O(\log^{O(1)}n) + o(|\bH|\log n)$ query time. 
As $O(|\bH|\log n)$ is the trivial upper bound, the lower bound ``dashes all such hopes''
(as Chazelle and Liu put it)
for obtaining a general technique in dimensions two and beyond.
However, the lower bound only applies to non-orthogonal subdivisions and only for data structures for
sub-quadratic  space consumption. 
Relatively recently, the barrier has been overcome via exploring those two directions.
Afshani and Cheng~\cite{AC.2dfc} studied various cases involving orthogonal subdivisions and obtained almost tight
results, e.g., 
the problem can be solved 
using linear space and with roughly $O(\log n + |\bH|\sqrt{\log n})$ query time for axis-aligned subdivisions.
Chan and Zheng~\cite{hopcroft.timothy22}  studied the non-orthogonal version but they allowed 
quadratic space usage and  they showed that queries can be answered
in $O(\log n + |\bH|)$ time. 
Finally, we need to mention that a more general form of iterated search which allows $\bH$ to be
any subgraph of $\G$ has also been studied for 2D queries 
but the bounds are weaker~\cite{concur.soda14}.

\subsection{A Brief Statement of Our Results}

We introduce a general technique to solve the iterated search problem for a certain 3D query problem.
Then, we show that our technique has a number of non-trivial applications.

\subsubsection{A General Framework for Iterated Searching for Lower Envelopes}

The exact statement of our technique requires familiarity with concepts such as shallow cuttings and therefore
we postpone it to after the ``Preliminaries'' section.
Here, we only state it informally:
we study the \textit{lower envelope iterated search (LEIS) problem}, where we are given a catalog
graph $\G$ and each vertex $v\in\G$ is associated with a catalog which
is a set $H_v$ of hyperplanes.
The goal is to preprocess the input into a data structure
such that given a query point $q$ and a connected subgraph $\bH\subset \G$,
the data structure should report whether $q$ is below or above the lower
envelope of $H_v$, for all $v \in \bH$.
In the shallow cutting variant of LEIS, the input has a parameter $k$ and the data structure stores a number of lists
of size $O(k)$ explicitly. Then at the query time the
data structure has two options: (i) report that $q$ is above the $k$-level of $H_v$  or (ii)
return a constant number, $c$, of pointers to $c$ stored lists $L_1, \cdots, L_c$  such that the union of $L_1, \cdots, L_c$ contains all the 
hyperplanes that pass below $q$.
In the \textit{on-demand} version of the problem, $\bH$ is given one vertex $v$ at a time such that 
all the given vertices form a connected subgraph of $\G$ but the query
must answer $q$ on $v$ before the next vertex is
given. See Figure~\ref{fig:problem_def} for an illustration
of our problem.

Our main result is that all the above variants can be solved with a linear-space data structure that has
$O(\log n + |\bH|\sqrt{\log n})$ query time.
This improves upon the trivial solution which results in $O(|\bH|\log n)$ query time.

We believe this is a surprising result since a LEIS query (which is very commonly used in combination
with shallow cuttings, see~\reflem{shallow}) is typically answered via a point location query
and as Chazelle and Liu's lower bound~\cite{chazelle2004lower} shows, it is hopeless to have an efficient
data structure for iterated point location queries.

\begin{figure}[tb]
  \centering
  \includegraphics{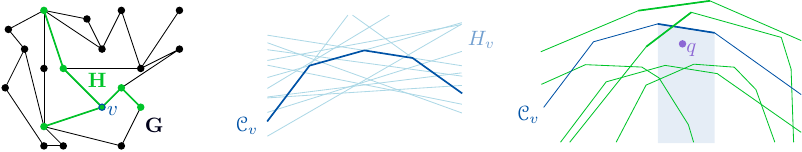}
  \caption{A schematic drawing of the LEIS Problem (with hyperplanes
    in $\R^2$ rather than $\R^3$). Each vertex $v$ of $\G$ is associated
    with a set of hyperplanes $H_v$. For each vertex $v$ in the query
    subgraph $\bH \subset \G$ we wish to find the triangle in the
    $k$-shallow cutting $\C_v$ above the query point $q$.
  }
  \label{fig:problem_def}
\end{figure}

\subsubsection{Some Applications of Our Technique.}\label{sec:weight}

Chazelle and Liu~\cite{chazelle2004lower} cite numerous potential applications of iterated 2D point location,
if such a thing were to exist, before presenting their lower bound.
The lower bound applies to some of those applications but not to all of them.
In particular, it turns out that for some geometric problems, we are able to give improvements.
One main problem is answering \textit{halfspace max queries} which currently has only 
a basic ``folklore'' solution in 3D, despite being used in ``blackbox'' reductions in at least
two general frameworks (more on this below).

\begin{problem}[3D halfspace max queries]
    Let $P$ be an input set of $n$ points in $\R^3$ each associated with a real-valued
    weight. Store them in a data structure such that given a query halfspace $q$ one can
    find the point with the largest weight inside $q$.
\end{problem}

Currently, we are only aware of a ``folklore'' solution that uses $O(n \log n)$ space and has the query
time of $O(\log^2 n)$~\cite{AP.range.sampling}.
Using our framework, we get the following improvements.
\begin{itemize}
    \item 3D halfspace max queries can be answered in $O(\log^{3/2}n)$ time with $O(n\log n)$ space.
    \item The current best solution for \textit{3D halfspace range sampling} uses a 
        ``blackbox'' data structure for weighted range sampling queries~\cite{AP.range.sampling}.
        As a result, we improve the query time of the data structure in~\cite{AP.range.sampling} from $O(\log^2+k)$ (for sampling $k$ points from a given query halfspace)
        to $O(\log^{3/2}n+k)$ by simply plugging in our solution for 3D halfspace max queries.
    \item Range sampling queries can be used as a blackbox to obtain solutions for 
        other approximate aggregate queries~\cite{afshani2023range} and thus our improve carries over to such queries as well.
    \item We get improved results for \emph{colored halfspace
        reporting}, in which we store a set of $n$ colored points so that given a
     query halfspace we can efficiently report all $k$ distinct colors of the
     points that appear in the halfspace. We can achieve $O(\log n\LL n +
     m\LL n)$ query time using linear space, where $m$ is the number
     of colors, or $O(\log n + k\log(m/k)\sqrt{\log n})$ time using
     $O(n\log n)$ space. This improves the query time of current solutions by a
     $O(\log n)$, respectively $O(\sqrt{\log n})$ factor.
   \item Some weighted range reporting problems can also be solved more efficiently:
        (\textit{weighted halfspace range reporting}) store a given set of weighted points such that given a query halfspace $q$ and an interval $[w_1, w_2]$,
        report all the points inside $q$ whose weight is between $w_1$ and $w_2$.
        (\textit{weighted halfspace stabbing}) This is similar to the previous problem with the role of points and halfspaces swapped.
    \item The general framework of~\cite{RahulT2022topk} combines weighted reporting and max reporting to give a general
        solution for ``top-$k$ reporting'' (report the $k$ heaviest points).
        Using our results, we can obtain a data structure for 3D halfspaces with $O(\log^{3/2}n+k)$ query time.
        This was the result ``missing'' in~\cite{RahulT2022topk} where they only list new results for $d=2$ and $d\ge 4$.

    \item Range reporting can also be used to obtain ``approximate counting'' results and a number of general reductions
        are available~\cite{rahul2017approximate,aronov2008approximating,har2011relative,afshani2007approximate} and so
        we can obtain improved results for ``approximate counting'' version of the above weighted reporting problems.

    \item The \textit{offline} version of halfspace range reporting can also be improved: given a set $P$ of $n$ points, where
    a point $p_i$ is given with an insertion time $s_i$ and a deletion time $d_i$, store them in a data structure such that
    given a query halfspace $h$ and a timestamp $t$, we can report all the points in $h$ that exist in time $t$.
    \item Via standard lifting maps, problems involving circles in 2D can be reduced to those involving halfspaces in 3D.
\end{itemize}


\section{Preliminaries}

We adopt the following notations and definitions.
In $\R^3$, a \textit{triangle} is the convex hull of three points whereas 
a \textit{simplex} is the convex hull of four points.
The $Z$-coordinate is assumed to be the vertical direction and
a \textit{vertical prism} $\tau^{\downarrow}$ is defined by a triangle $\tau$ and it consists of all the points that are on $\tau$ or directly 
below $\tau$ (wrt to the $Z$-coordinate).
For a point $q \in \R^3$,  $q^{\downarrow}$ represents a downward ray starting from $q$. 
Given a set $A$ of geometric objects and another geometric object $r$, the 
\textit{conflict list} of $r$ wrt $A$, denoted by $\Diamond(A,r)$, is the subset of $A$ that intersects $R$.
For example, if $A$ is a set of polytopes and $r$ is a  point, then $\Diamond(A,r)$ is the set of
of polytopes that contain $r$ or
if $A$ is a set of hyperplanes and $r$ is a point, then $\Diamond(A,r^{\downarrow})$ is the
subset of hyperplanes that pass through or below $r$;
in this latter case, 
$|\Diamond(A,r^{\downarrow})|$ is denoted as the \textit{level} of $r$ (wrt $A$).
For a given parameter $k$, $1 \le k \le n$, the $(\le k)$-level of $A$ is the set of all the points
in $\R^3$ with level at most $k$ and the $k$-level of $A$ is defined as the boundary of the $(\le k)$-level; 
in particular, the $0$-level of $A$ is the lower envelope of $A$. 

\begin{ob}\label{ob:below}\cite{c00}
    Let $\tau$ be a triangle and let $\Delta=\tau^{\downarrow}$ be the vertical prism defined by it.
  Let $v_1, v_2$, and $v_3$ be the vertices of $\tau$.
  Any hyperplanes that intersects $\Delta$, passes below at least one of the vertices of $\tau$.
  In other words, for any set $H$ of hyperplanes we have
  \[
    \Diamond(H,\Delta) \subset \Diamond(H,v_1^{\downarrow}) \cup\Diamond(H,v_2^{\downarrow}) \cup \Diamond(H,v_3^{\downarrow}).
\]
\end{ob}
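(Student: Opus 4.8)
The plan is to reduce the claimed set inclusion to a one-line convexity argument after reformulating everything in terms of affine functions. As is standard for lower-envelope problems (and in~\cite{c00}), I assume the hyperplanes of $H$ are non-vertical, so each $h\in H$ is the graph of an affine map $f_h\colon\R^2\to\R$; I likewise take $\tau$ to be non-vertical, so that $\tau$ is the graph of an affine map $g$ over the triangle $T\subseteq\R^2$ obtained as the vertical projection of $\tau$, and I let $p_1,p_2,p_3\in\R^2$ be the projections of $v_1,v_2,v_3$, so that $g(p_i)$ is exactly the $Z$-coordinate of $v_i$.

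First I would restate both sides analytically. A point lies in $\Delta=\tau^{\downarrow}$ precisely when its projection lies in $T$ and its $Z$-coordinate is at most $g$ evaluated there; hence $h$ meets $\Delta$ (that is, $h\in\Diamond(H,\Delta)$) if and only if there is some $p\in T$ with $f_h(p)\le g(p)$. Likewise $h$ passes through or below $v_i$ (that is, $h\in\Diamond(H,v_i^{\downarrow})$) if and only if $f_h(p_i)\le g(p_i)$. So the statement is equivalent to the implication: if $f_h(p)\le g(p)$ holds for some $p\in T$, then $f_h(p_i)\le g(p_i)$ holds for at least one $i\in\{1,2,3\}$.

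Next I would prove this by contraposition. Suppose $f_h(p_i)>g(p_i)$ for every $i$, and set $\varphi:=f_h-g$, which is affine on $T$ and satisfies $\varphi(p_1),\varphi(p_2),\varphi(p_3)>0$. Any $p\in T=\conv(p_1,p_2,p_3)$ can be written as a convex combination $p=\lambda_1p_1+\lambda_2p_2+\lambda_3p_3$ with $\lambda_i\ge0$ and $\lambda_1+\lambda_2+\lambda_3=1$, and since $\varphi$ is affine this gives $\varphi(p)=\lambda_1\varphi(p_1)+\lambda_2\varphi(p_2)+\lambda_3\varphi(p_3)>0$, because the $\lambda_i$ are nonnegative, not all zero, and each $\varphi(p_i)$ is strictly positive. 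Hence $f_h(p)>g(p)$ for every $p\in T$, so $h$ misses $\Delta$, which is the contrapositive we wanted.

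The one point that deserves care is the \emph{non-verticality} assumption: a vertical hyperplane can slice straight through the prism $\Delta$ without passing below any of its three vertices, so the observation genuinely fails for vertical hyperplanes. Since those are excluded throughout the paper (and the statement is vacuous when $\tau$ is degenerate), the elementary convex-combination computation above is all that is needed; I do not anticipate any further obstacle, as this is exactly the basic geometric fact underlying shallow-cutting constructions (cf.~\cite{c00}).
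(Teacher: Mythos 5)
The paper cites Observation~\ref{ob:below} to Chan~\cite{c00} and supplies no proof of its own, so there is no internal argument to compare against. Your proof is correct and is exactly the standard elementary argument: view $\tau$ as the graph of an affine map $g$ over the projected triangle $T$, view the hyperplane as the graph of an affine map $f_h$, and observe by contraposition that if the affine function $\varphi=f_h-g$ is strictly positive at the three vertices of $T$ then, being affine, it is strictly positive on all of $T=\conv(p_1,p_2,p_3)$, so $h$ lies strictly above $\tau$ and misses $\Delta$. Your side remark about vertical hyperplanes is well taken and is indeed the implicit convention throughout this literature; the only tiny imprecision is calling the degenerate-triangle case ``vacuous''---it isn't, but your convex-combination computation covers it unchanged since $T$ is still the convex hull of $p_1,p_2,p_3$, so nothing is lost.
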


We will extensively use shallow cuttings in our techniques. By now, this is a standard tool that is used
very often in various 3D range searching problems.

\begin{lemma}\label{lem:shallow}\cite{chan.shallow.cutting,Ramos.SOCG99,Matousek.reporting.points}
    There exists a fixed constant $\alpha > 1$ such that 
    for any given set $H$ of $n$ hyperplanes in 3D and a given parameter $k$, $1 \le k \le n$, the following holds:
    There exists a set of $O(n/k)$ hyperplanes such that their lower envelope
    lies above the $k$-level of $H$ but below the $(\alpha k)$-level of $H$.
    The lower envelope can be triangulated and for every resulting triangle $\tau$,
    one can also build the conflict list $\Diamond(H,\tau^{\downarrow})$ in $O(n \log n)$ time in total
    (over all triangles $\tau$).
    The shallow cutting $\C$ is taken to be the set of all the triangles and each triangle stores
    a pointer to its explicitly stored conflict list. 

    Finally, one can store the projection of the shallow cutting in a 2D point
    location data structure that uses $O(n/k)$ space such that given a query
    point $q \in \R^3$, in $O(\log(n/k))$ time one can decide whether
    $q$ lies below any triangle in $\C$ and if it does, then one can find the
    triangle directly above $q$, denoted by $\C(q)$.
    The subset of hyperplanes passing below $q$ is a subset of the conflict list of
    $\C(q)$, i.e., 
    $\Diamond(H,q^{\downarrow}) \subset \Diamond(H,\C(q)^{\downarrow})$.

\end{lemma}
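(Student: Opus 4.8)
The plan is to assemble the lemma from the standard shallow-cutting machinery of Matou\v{s}ek, Ramos, and Chan, and I sketch the three construction ingredients plus the query argument. For the sandwiched envelope, I would draw a random subsample $R\subseteq H$ with $|R|=r=\Theta(n/k)$ and take $\C$ to be a canonical triangulation of the lower envelope of $R$, which has $O(r)=O(n/k)$ triangles. A point on this envelope has expected level $\Theta(n/r)=\Theta(k)$ in $H$, and the Clarkson--Shor exponential-decay bound shows that every vertex of the envelope has level $O(k)$ with high probability. To also force the envelope \emph{above} the $k$-level (a plain sample only controls the level in expectation), one oversamples slightly and discards the part of the envelope whose level would drop below $k$; any triangle whose conflict list is still larger than $\alpha k$ is repaired by recursing inside its prism, and since each recursive instance shrinks by a constant factor the total number of triangles stays $O(n/k)$ while every triangle ends up with level between $k$ and $\alpha k$.

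For the conflict lists, by Observation~\ref{ob:below} we have $\Diamond(H,\tau^{\downarrow})\subseteq\Diamond(H,v_1^{\downarrow})\cup\Diamond(H,v_2^{\downarrow})\cup\Diamond(H,v_3^{\downarrow})$ for the vertices $v_i$ of $\tau$, so since there are $O(n/k)$ vertices each of level $O(k)$ the combined size of all conflict lists is $O(n)$. To build them I would first construct the envelope of $R$ in $O((n/k)\log(n/k))$ time, then locate each hyperplane $h\in H$ against the planar projection of the triangulation and trace the triangles whose prism $h$ crosses, charging the traversal to the conflict-list entries it produces; this is the point where I would invoke the known $O(n\log n)$-time construction routines rather than re-deriving them.

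For the 2D point-location structure, the projections of the triangles tile a simply connected planar region of complexity $O(n/k)$; adding one unbounded outer cell (meaning ``level exceeds $k$'') turns it into a planar subdivision of size $m=O(n/k)$, on which I would put any $O(m)$-space, $O(\log m)$-query planar point-location data structure (Kirkpatrick's hierarchy, or the persistence-based method of Sarnak--Tarjan). A query locates the vertical projection of $q$: if it falls in the outer cell, or in a triangle $\tau$ whose supporting plane passes below $q$, the structure reports that $q$ lies above the $k$-level; otherwise it returns $\C(q)=\tau$. Since $q$ below $\tau$ means $q\in\tau^{\downarrow}$, any hyperplane below $q$ meets $\tau^{\downarrow}$ along the vertical line through $q$, which gives $\Diamond(H,q^{\downarrow})\subseteq\Diamond(H,\C(q)^{\downarrow})$.

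The delicate part, and the main obstacle, is the two-sided sandwiching together with the $O(n\log n)$ construction time: a size-$\Theta(n/k)$ random sample only controls levels in expectation, so one needs the exponential-decay estimate plus the recursive refinement inside over-full prisms to turn this into the stated worst-case guarantees, and that refinement (and the accompanying careful accounting of the construction cost) is where the bulk of the original arguments lie.
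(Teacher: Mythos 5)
The paper does not actually prove this lemma; it is stated with citations to Matou\v{s}ek, Ramos, and Chan (see also Chan--Tsakalidis) and used as a black box, so there is no in-paper argument to compare against. Assessing your sketch on its own terms: the conflict-list bound and construction via Observation~\ref{ob:below}, and the 2D point-location structure over the projected triangulation, are both consistent with the standard machinery and fine.

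The genuine gap is in the first ingredient. The lower envelope of a random subsample $R\subseteq H$ \emph{cannot} lie above the $k$-level of $H$. Since $R\subseteq H$, the envelope of $R$ lies weakly above the envelope of $H$ but \emph{coincides} with it wherever a plane of $R$ appears on the lower envelope of $H$, and at such a point its level in $H$ is $0<k$. A sample of size $\Theta(n/k)$ will contain such planes with overwhelming probability (the lower envelope of $H$ may have $\Theta(n)$ facets), so this is a structural obstruction, not a tail event that oversampling or Clarkson--Shor can suppress. Moreover, even away from such coincidences, controlling the level at the \emph{vertices} of the sampled envelope (which is what the exponential-decay bound gives you) does not control the level on the interiors of its faces; the level is only piecewise constant there and can dip between vertices. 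Finally, the ``discard the dipping part and recurse inside over-full prisms'' repair destroys precisely the property the lemma promises: the shallow cutting must be the lower envelope of $O(n/k)$ hyperplanes, i.e.\ the boundary of a single convex polytope, and recall that for the lower envelope of a family $H'$ to lie above the $k$-level, \emph{every} plane of $H'$ must lie above the $k$-level everywhere. Cutting out low-level pieces and patching them with recursive sub-cuttings gives a non-convex terrain, and that convexity is exactly what the rest of the paper exploits (Lemma~\ref{lem:overlay} and everything built on it). The actual constructions in the cited works take a genuinely different route (sparse $\varepsilon$-nets / divide-and-conquer to produce a family of planes each of which clears the $(\le k)$-level, together with a separate argument bounding their number by $O(n/k)$ and keeping the envelope below the $\alpha k$-level), and that part has to be imported rather than re-derived from a plain sample.
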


\mypara{Remark.}
The above is a ``modern'' statement of shallow cuttings in 3D. These were originally discovered by
Matou\v sek~\cite{Matousek.reporting.points}. However, a simple observation by Chan~\cite{c00} (\refob{below}) allows one
to simplify the structure of shallow cuttings into the lower envelope of some hyperplanes.
We will be strongly using convexity in our techniques so this treatment is extremely crucial for our purposes. 

We now define the main iterated search problems that we will be studying. 

\begin{prob}[LEIS3D]

    Let $\G$ be an input graph of constant degree, called the \underline{catalog graph}, 
    where each vertex $v \in G$ is associated with an input set $H_v$ of hyperplanes
    in $\R^3$.
    Let $n$ be the total input size and let $k$, $1 \le k \le n$, be a parameter given with the input.
  Store the input in a structure to answer queries that are given by a point 
  $q \in \R^3$ and a connected subgraph $\bH \subset \G$. 
  For each vertex $v \in \G$, the data structure should store a $k$-shallow cutting of 
  $H_v$. Then, for each $v \in \bH$ at the query time, it must 
  either (i) return a triangle $\tau$  above $q$ together 
  with a list of  $c$ pointers, for a constant $c$, to $c$ conflict lists $L_1,
  \cdots, L_c$ in the $k$-shallow cutting of $H_v$,  such that
  $\Diamond(H_v,\tau) \subset \cup_{i=1}^c L_i$ or (ii)
  indicate that $q$ lies above the $k$-level of $H_v$.

  In the \underline{anchored} version of the problem, for every $v\in \G$ the data
  structure must explicitly store a set $T_v$ of triangles such that 
  the triangle $\tau_v$ returned at query time must be an element of $T_v$.
  In the \underline{on-demand} version of the problem, $q$ is given first but $\bH$ is initially 
  hidden and is revealed
  one vertex at a time at query time. The data structure must answer the
  query on each revealed vertex before the next vertex can be revealed.
  For simplicity, we assume that $\bH$ is a walk, i.e., each revealed vertex is connected to the
  previously revealed vertex.

\end{prob}

Note for $k=0$, we have a special case where during the query it suffices to
simply return a triangle $\tau$ that is below
the lower envelope and above $q$. 
Also the restriction that $\bH$ must be a walk for the
on-demand version is not strictly necessary in RAM models.
However, removing this restriction creates some technical issues in some variants of the
pointer machine model and thus it is added to avoid 
needless complications. 
Furthermore, a walk is sufficient in all the applications that we will consider. 
Our main technical result is the following.

\begin{restatable}{theorem}{thmgraphsc}\label{thm:graphsc}
    The on-demand LEIS3D problem for an input of size $n$ on catalog graph
    $\G$ can be solved with a data structure that uses $O(n)$ space and has the
    following query time: 
    (i) $O(\log n \log\log n + |\bH|\log\log n)$ if $\G$ is a path
    and (ii) 
  $O(\log n + |\bH|\sqrt{\log n })$ when $\G$ is a graph of constant degree.  
\end{restatable}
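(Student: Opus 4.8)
The plan is to first discard the shallow-cutting bookkeeping. For each vertex $v$, \reflem{shallow} (together with \refob{below}) says the $k$-shallow cutting of $H_v$ is the triangulated lower envelope of a set of $O(n_v/k)$ hyperplanes, where $n_v=|H_v|$ and $\sum_v n_v=n$, and each triangle $\tau$ carries a pointer to a precomputed conflict list $\Diamond(H_v,\tau^{\downarrow})$ of size $O(k)$; these lists total $O(n)$ space. A LEIS3D query at $v$ is then just a planar point location in the projection of this envelope: if $q^{\downarrow}$ misses every triangle, output case (ii); otherwise return the triangle $\C_v(q)$ above $q$ and, with $c=1$, the pointer to its conflict list, which is correct since $\Diamond(H_v,q^{\downarrow})\subseteq\Diamond(H_v,\C_v(q)^{\downarrow})$. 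So it suffices to solve, on a constant-degree catalog graph and on-demand, iterated point location in lower envelopes of planes, in $O(n)$ space with the claimed times.

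\textbf{Step 2: a per-vertex shallow-cutting hierarchy.} For each $v$ I would build a tower of shallow cuttings of $H_v$ with parameters $k=k_0<k_1<\cdots<k_t=\Theta(n_v)$, the coarsest having $O(1)$ triangles. A single point location proceeds top-down: inside a level-$(i{+}1)$ triangle only $O(k_{i+1}/k_i)$ level-$i$ triangles can occur (this is exactly \refob{below}), so refining one level costs $O(\log(k_{i+1}/k_i))$ and the telescoping sum is $O(\log n)$. Taking all ratios equal to $2^{\sqrt{\log n}}$, so that $t=\Theta(\sqrt{\log n})$, is the choice tailored to part (ii); a suitable amount of weight-balancing and indirection keeps the tower, including its navigation pointers, within $O(n_v)$ space, hence $O(n)$ overall.

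\textbf{Step 3: fractional cascading across the graph, one level at a time.} This is the core. For each level $i$ separately I run a fractional-cascading-type construction on $\G$: into the plane set defining the level-$i$ cutting of $v$ I merge, for every neighbor $u$, a coarser shallow cutting of the planes of $u$ relevant there, and I install a bridge from each level-$i$ triangle of $v$ to the corresponding level-$i$ triangle of $u$. The intended invariant is that once $q$ has been located in the level-$(i{+}1)$ cutting of a newly revealed vertex $u$ (via the same cascade one level up) and in the level-$i$ cutting of the previously visited neighbor, following a bridge plus an $O(1)$-size local search lands $q$ in the level-$i$ cutting of $u$. Granting this, each revealed vertex costs $O(1)$ per level, hence $O(t)=O(\sqrt{\log n})$ in total, while the first vertex pays the single-query cost $O(\log n)$ of the tower; this gives part (ii). Since queries only follow bridges forward as the walk unfolds, the scheme is automatically on-demand, and restricting $\bH$ to a walk is precisely what keeps bridge-following local on a pointer machine. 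For part (i) I would instead recurse the whole construction, van Emde Boas style, on the combinatorial complexity of the cuttings: $\Theta(\log\log n)$ recursion levels, each contributing $O(1)$ to the per-step cost (yielding $O(|\bH|\log\log n)$) but an $O(\log n)$ term to the first-vertex cost (yielding $O(\log n\log\log n)$).

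\textbf{Main obstacle.} Everything hinges on the bridge invariant of Step 3. In one dimension this is the classical fractional-cascading argument (total order plus constant-fraction sampling), but for general planar subdivisions the analogue is provably impossible by the Chazelle--Liu lower bound~\cite{chazelle2004lower}, so the proof must exploit that every subdivision in sight is the minimization diagram of a set of planes. The intuition is that ``$q$ lies below a given level-$(i{+}1)$ triangle'' certifies the height of $q$ relative to the $k_i$-level to within the narrow band guaranteed by \reflem{shallow}, and---because the merging step has forced the neighbor's plane set to approximate the current one within that band---this pins the neighbor's level-$i$ triangle down to $O(1)$ candidates, which the bounded combinatorial structure of a lower envelope of planes (convexity) lets us enumerate locally. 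Turning this into a clean lemma while simultaneously keeping the merged conflict lists small enough for linear space, and making sure the cross-graph cascading does not multiply with the top-down refinement inside each vertex, is where essentially all the work lies; the hierarchy, the space accounting, and Step 1 are comparatively routine.
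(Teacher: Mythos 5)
Your proposal takes a genuinely different route from the paper, and the route has a gap exactly where you flag it.

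The paper does \emph{not} do level-by-level fractional cascading with bridges between neighboring catalogs. Instead, it exploits convexity through a single structural fact: the overlay of $m$ convex polytopes in $\R^3$ of total complexity $n$ has complexity only $O(nm^2)$ (\reflem{overlay}). This lets it build, for each heavy vertex $v$, a 3D point location structure on the \emph{simultaneous overlay} of the shallow cuttings $\C_u$ over the whole $\ell$-neighborhood $\N_\ell(v)$ at once, where $\ell=\sqrt{\log n}$ (so $|\N_\ell(v)|$ can be $2^{\ell+1}$). The crucial observation is then trivial: a cell of that overlay does not cross any $\C_u$, so once $q$ is located in the overlay, its position in \emph{every} $\C_u$ with $u\in\N_\ell(v)$ is read off by a constant number of pointer follows; no per-level bridge reasoning and no ``$O(1)$ candidates'' argument is needed. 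What makes this affordable is a second tool you don't have: an $O(\log n)$-time point location structure on overlays of up to $2^{\sqrt{\log n}}$ convex polytopes, built via a randomized Dobkin–Kirkpatrick-style hierarchy with batches of $T=\sqrt{\log n}$ simplification steps (\reflem{advpl}, with \reflem{basicprop} and \reflem{int} controlling the cell-to-cell branching). The per-vertex apparatus is also lighter than your tower: one coarse $x$-shallow cutting (with $x=k\cdot 2^{c\ell}$) plus one secondary $k$-shallow cutting per coarse triangle, not $\Theta(\sqrt{\log n})$ levels.

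Concretely, the gap in your proof is Step 3's bridge invariant, and you yourself identify it as ``where essentially all the work lies.'' You claim that after merging a coarser sample of the neighbor's planes, locating $q$ against the merged envelope pins down its cell in the neighbor to $O(1)$ candidates, by convexity. This is not established, and it is not clear it can be: the Chazelle–Liu obstruction lives precisely in the step ``my cell plus a bridge gives your cell up to $O(1)$ ambiguity,'' and convexity of a single lower envelope does not by itself control how the cells of two different envelopes interleave after projection to the $xy$-plane. The paper's contribution is to replace that local bridge claim with the global overlay-complexity bound plus a fast point location on the overlay; without some substitute for that idea, Step 3 is an assertion, not a proof. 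A secondary concern is your van Emde Boas treatment of part (i): the paper gets the path bound from the same framework by just using the weaker \reflem{basicpl} (whose $O(\log n\log m)$ query is tolerable when the neighborhood has only $O(\ell)$ vertices), not by recursing on combinatorial complexity, and it is not evident that your vEB recursion would preserve linear space once each level also carries merged neighbor planes.
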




\section{Building the Tools}\label{sec:tools}

In this section, we will prove a number of auxiliary results that will later be combined to build
our main results. 

\subsection{Point Location in 3D Convex Overlays}

The following observation is our starting point. 

\begin{restatable}{lemma}{lemoverlay}\label{lem:overlay}
  Let $\P_1, \cdots, \P_m$ be $m$ convex polytopes in $\R^3$ of total complexity $n$.
  Let $\A$ be their overlay. 
  The overlay $\A$ has $O(nm^2)$ complexity.
\end{restatable}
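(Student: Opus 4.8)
The plan is to first bound the number of \emph{vertices} of the overlay $\A$ by $O(nm^2)$, and then show that the numbers of edges, $2$-faces, and $3$-cells of $\A$ are all within a constant factor of the number of vertices, via Euler's relation. Throughout I assume general position: no four facet-supporting planes pass through a common point, and no two vertices of distinct polytopes coincide. Degeneracies can be removed by an infinitesimal symbolic perturbation, which does not decrease the complexity, so the bound obtained in the generic case is the one we want. Write $|\P_i|$ for the complexity of $\P_i$ (vertices $+$ edges $+$ facets), so $\sum_i |\P_i| = O(n)$, and recall the elementary fact that the intersection of two (resp.\ three) convex polytopes in $\R^3$ with $a$ (resp.\ $a$) facets in total is a convex polytope expressed as the intersection of $a$ halfspaces, hence a $3$-polytope of complexity $O(a)$ by Euler's formula.

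Classify each vertex $v$ of $\A$ by the set $S(v)=\{i : v\in\partial\P_i\}$. If $v$ is an original vertex of some $\P_i$, charge it there: there are $O(n)$ such vertices. Otherwise $v$ is a vertex of $\A$ created by the interaction of facets of the polytopes in $S(v)$, and since $v$ is not an original vertex we must have $|S(v)|\ge 2$; general position forces $|S(v)|\in\{2,3\}$. If $S(v)=\{i,j\}$, then $v$ is the point where an edge of one of the two polytopes crosses a facet of the other, so $v$ is the common point of three facet-supporting planes (two from one polytope, one from the other); as $v\in\partial\P_i\cap\partial\P_j\subseteq\P_i\cap\P_j$ and lies on the boundary planes of three of the defining halfspaces of $\P_i\cap\P_j$, the point $v$ is a vertex of the polytope $\P_i\cap\P_j$. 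Hence the number of such vertices for the pair $\{i,j\}$ is at most the number of vertices of $\P_i\cap\P_j$, which is $O(|\P_i|+|\P_j|)$; summing over all $\binom m2$ pairs gives $O\!\big(\sum_{\{i,j\}}(|\P_i|+|\P_j|)\big)=O(mn)$. If $S(v)=\{i,j,k\}$, then $v$ is the common point of one facet-supporting plane from each of the three polytopes, and, as before, $v\in\P_i\cap\P_j\cap\P_k$ and is the apex of three of that polytope's defining halfspaces, so $v$ is a vertex of $\P_i\cap\P_j\cap\P_k$, a polytope of complexity $O(|\P_i|+|\P_j|+|\P_k|)$. Summing over all $\binom m3$ triples, each $|\P_i|$ is counted $\binom{m-1}{2}=O(m^2)$ times, for a total of $O(m^2 n)$. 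Altogether $\A$ has $O(n)+O(mn)+O(m^2 n)=O(nm^2)$ vertices.

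For the remaining faces, first count the \emph{base segments} making up the $1$-skeleton of $\A$: the edges of the individual polytopes, which number $O(n)$ in total, and the maximal segments of the intersection curves $\partial\P_i\cap\partial\P_j$, each of which lies in one facet of $\P_i$ and one facet of $\P_j$ and hence corresponds to an edge of the polytope $\P_i\cap\P_j$; the number of the latter for a fixed pair is $O(|\P_i|+|\P_j|)$, summing to $O(mn)$ overall. Each vertex of $\A$ lies in the relative interior of $O(1)$ base segments (a triple vertex lies interior to the three intersection-curve segments through it, a pair vertex interior to one polytope edge, an original vertex to none), so the number of vertex–segment incidences is $O(nm^2)$, and therefore the number of edges of $\A$ — each base segment being cut into sub-edges by the vertices in its relative interior — is $O(mn)+O(nm^2)=O(nm^2)$. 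Each $2$-face of $\A$ lies inside a facet of some $\P_i$, and within that facet the pieces of $\A$ form a planar subdivision, so by the standard Euler bound its number of faces is $O(\text{its vertices}+\text{its edges})$; summing over all facets of all polytopes (each vertex and edge of $\A$ lying on $O(1)$ such facets by the type analysis) shows $\A$ has $O(nm^2)$ two-faces. A final application of the Euler relation $V-E+F-C=O(1)$ in $\R^3$ bounds the number $C$ of three-cells by $O(nm^2)$ as well.

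The only genuinely delicate point is the charging in the last two cases: one must verify that a point where facets of three distinct polytopes meet really is a \emph{vertex} of the triple intersection $\P_i\cap\P_j\cap\P_k$ — which is what lets the $O(|\P_i|+|\P_j|+|\P_k|)$ polytope-complexity bound apply — and one must ensure that degenerate configurations (four or more concurrent facet planes, or coincident polytope vertices) cannot inflate the count; the symbolic-perturbation remark handles the latter. Everything else is bookkeeping with Euler's formula.
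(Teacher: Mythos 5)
Your proof is correct and follows the same three-way classification of overlay vertices as the paper (by how many distinct polytope boundaries each vertex lies on), but you bound each class by a genuinely different argument. The paper counts type-2 vertices directly: each polytope edge crosses each other polytope's boundary at most twice, giving $O(nm)$; for type-3 it first argues that there are only $O(nm)$ intersection-curve segments $\partial\P_i\cap\partial\P_j$ and then lets each hit the remaining $m-2$ polytopes at most twice, giving $O(nm^2)$. You instead charge each type-2 (resp.\ type-3) overlay vertex to a vertex of the intersection polytope $\P_i\cap\P_j$ (resp.\ $\P_i\cap\P_j\cap\P_k$), whose complexity is $O(|\P_i|+|\P_j|)$ (resp.\ $O(|\P_i|+|\P_j|+|\P_k|)$) by Euler's formula, and sum over pairs and triples. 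Both arguments exploit convexity, but the paper does so at the line level (``a line meets a convex body in at most two points'') while you do it at the polytope level (``the intersection of $a$ halfspaces is a $3$-polytope of complexity $O(a)$''). You also go further than the paper in one respect: you explicitly bound the edges, 2-faces, and 3-cells of the overlay via base segments and Euler relations, whereas the paper counts only vertices and leaves the rest implicit. The one step you should make fully explicit is the claim that a type-2 or type-3 overlay vertex $v$ is a bona fide vertex of the relevant intersection polytope: this requires the three facet-supporting planes through $v$ to have linearly independent normals, so that the local cone of $\P_i\cap\P_j$ (or $\P_i\cap\P_j\cap\P_k$) at $v$ is pointed. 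That is indeed a general-position consequence (for type-2, two of the planes meet in a line transverse to the third; for type-3, it is the standard ``no three planes share a line'' assumption, which you should add to your perturbation list), but your final paragraph only flags this as ``delicate'' without actually supplying the argument.
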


\begin{proof}
  Observe that the boundaries of each convex polytope is composed of vertcies,
  edges, and faces.
  Let $v$ be a vertex in the overlay $\A$.
  We have the following cases:
  \begin{enumerate}
    \item $v$ is a vertex in one of the original convex polytopes. 
      The number of vertcies of this type is clearly $O(n)$.
    \item $v$ is the result of the intersection an edge $e_i \in \P_i$ and $\P_j$ for
      $i\not = j$.
      Observe that each edge $e_i$ intersects each other convex polytope in at most two points.
      This means that each edge $e_i$ can create at most $m$ vertices of this type.
      Thus, the total number of vertices of this type is $O(nm)$.

    \item Finally, $v$ could be the result of the intersection of three faces $f_i \in \P_i$, $f_j \in \P_j$
      and $f_k \in \P_k$ for three distinct convex polytopes. 
      Observe that in this case, the $v$ is the result of the intersection of an edge $e_{ij}$ and the face $f_k$ where
      $e_{ij}$ is the intersection of the faces $f_i$ and $f_j$. 
      If we can prove that the number of such edges $e_{ij}$ is $O(mn)$, then the lemma follows by the same argument as the
      previous two.
      To prove this claim, observe that the edge $e_{ij}$ is also adjacent to two vertices which are the result of the intersection 
      of either $f_i$ and an edge of $\P_j$ or the intersection of an edge of $\P_i$ and $f_j$. In either case, by the previous arguments,
      the number of edges $e_{ij}$ is bounded by $O(mn)$. \qedhere
  \end{enumerate}
\end{proof}

It is critical to observe that an equivalent lemma for 2D subdivision does not
hold.  For example, the overlay of two subdivisions of the plane into $n$
``tall and thin'' and ``wide and narrow'' slabs has $\Omega(n^2)$ size.  Thus,
the above lemma captures a fundamental impact of adding convexity to the
picture. 
The next lemma shows that we can perform point location in 3D via a simple idea
and a small query overhead.

\begin{restatable}[basic point location]{lemma}{basicpl}\label{lem:basicpl}
    Let $S =\{ \P_1, \cdots, \P_m\}$ be a set of $m$ convex polytopes in $\R^3$ of total complexity $n$.
We can build a data structure that uses $O(n m^2)$ space that does the following: 
    The data structure stores a set $T$ of  ``anchors'' (triangles) such that each anchor is either
    inside a polytope or lies completely outside it.
    Then, given  a query point $q$,
    it can output an anchor $\tau \in T$ such that $\Diamond(S,q)$ is equal
    to the subset of polytopes that contain $\tau$. 
    The time to answer the query is $O(\log n \log m)$.
\end{restatable}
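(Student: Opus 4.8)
The plan is to build the overlay $\A$ of the $m$ convex polytopes and perform point location in $\A$, but with a twist: a cell of $\A$ does not by itself tell us which polytopes contain a query point, so we need to attach anchors that do. First I would invoke Lemma \ref{lem:overlay} to get that $\A$ has complexity $O(nm^2)$. Every (full-dimensional) cell $C$ of $\A$ has the key property that for each polytope $\P_i$, either $C \subset \P_i$ or $C \cap \P_i = \emptyset$; thus the set of polytopes containing a point $q$ depends only on the cell of $\A$ containing $q$, which is exactly the set $\Diamond(S,q)$ we must recover. So it suffices to locate $q$ in $\A$ and then read off (a pointer to) the precomputed list of polytopes associated with that cell.

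The next step is to make point location in $\A$ itself work within the claimed $O(\log n \log m)$ time and $O(nm^2)$ space using only "simple" tools (the paper is deliberately avoiding heavy 3D point-location machinery). The natural approach is a coordinate-based sweep / decomposition: sort the $O(nm^2)$ vertices of $\A$ by $x$-coordinate into $O(nm^2)$ $x$-slabs, and within each slab the arrangement restricted to that slab is combinatorially a 2D arrangement of (pieces of) the faces, which we further decompose along $y$, and finally do a 1D search along $z$. A cleaner alternative, and the one I would actually carry out, is to use a \emph{persistence}-style or \emph{balanced-tree-of-slabs} construction: build a balanced search tree on the $x$-coordinates of the vertices; at each level we locate $q_x$ (total $O(\log(nm^2)) = O(\log n + \log m)$ steps, but each step recurses into a $y$-then-$z$ structure). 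Actually the cleanest way to hit $O(\log n\log m)$ is to observe that each individual polytope $\P_i$ admits an $O(\log n)$-time point-location structure (it is convex, so "is $q\in\P_i$?" is a single query against its lower and upper envelopes, i.e.\ one planar point location taking $O(\log n)$ time, giving the face of $\partial\P_i$ above/below $q$). To recover the \emph{full} cell of $\A$, however, querying all $m$ polytopes independently costs $O(m\log n)$, which is too slow — so we must share work across the $m$ polytopes. The device for that is a fractional-cascading-like or merged structure over the $m$ convex polytopes: we build, level by level, the merged $x$-sorted sequence and link corresponding search locations, so that after an initial $O(\log n)$ search we can advance through all $m$ polytopes spending only $O(\log m)$ additional time in total for the combined search. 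I would present this as: build a single 2D point location structure on the overlay of the \emph{lower envelopes} of all $\P_i$ (projected to the $xy$-plane), whose complexity is $O(nm^2)$ by Lemma \ref{lem:overlay}; this locates $q$'s $xy$-projection in $O(\log(nm^2)) = O(\log n + \log m) = O(\log n)$ time, identifying for each of at most $m$ relevant polytopes the face of its lower (and upper) envelope over that cell; then a single vertical ($z$-direction) binary search over the at most $m$ such faces sorted by height, costing $O(\log m)$, pins down the cell of $\A$ containing $q$. Total: $O(\log n\log m)$ if one is slightly careless about the vertical step, or $O(\log n + \log m)$ if done tightly; the statement only asks for $O(\log n\log m)$, so the simpler analysis suffices.

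Finally, the anchors: for each cell $C$ of $\A$ I would pick one representative triangle $\tau_C$ strictly interior to $C$ (this is possible after triangulating $\A$; triangulation increases complexity by only a constant factor, so still $O(nm^2)$). By the cell property, $\tau_C$ lies entirely inside every polytope that contains $C$ and entirely outside every polytope disjoint from $C$, which is exactly the "anchor" condition, and the set of polytopes containing $\tau_C$ equals $\Diamond(S,q)$ for every $q\in C$. We store with each cell a pointer to its anchor and to the (precomputed) list of containing polytopes; the query returns $\tau_C$ for the located cell $C$. Correctness is immediate from the cell property; the space bound is $O(nm^2)$ from Lemma \ref{lem:overlay} plus the (asymptotically no larger) point-location and anchor data; the time bound is as analyzed above.

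\textbf{Main obstacle.} The delicate point is getting the query time to $O(\log n\log m)$ rather than the naive $O(m\log n)$: a point-location query in a single convex polytope is easy, but we must locate $q$ with respect to \emph{all} $m$ of them simultaneously. The argument that the overlay $\A$ (whose complexity Lemma \ref{lem:overlay} bounds) can be searched in polylogarithmic time using only elementary slab/envelope decompositions — without invoking a black-box 3D point-location result — is the part that needs the most care, and it is exactly where the $O(nm^2)$ space is "spent."
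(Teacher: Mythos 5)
Your high-level plan---reduce to locating $q$ in a cell of the 3D overlay $\A$, then attach one anchor triangle per cell and read off the precomputed list of containing polytopes---is exactly the same reduction the paper uses, and your observation that cells of $\A$ are ``pure'' with respect to every $\P_i$ is the right correctness argument.

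However, the concrete point-location scheme you propose has a genuine gap. Your ``cleanest'' route builds a 2D point-location structure on \emph{the overlay of the $xy$-projections of the $m$ lower (and upper) hulls} and claims this planar overlay has complexity $O(nm^2)$ ``by Lemma~\ref{lem:overlay}.'' Lemma~\ref{lem:overlay} bounds the complexity of the \emph{3D} arrangement of the polytope boundaries; it does not bound the complexity of the overlay of their 2D projections. When you project the faceted surfaces down, two 3D edges that never meet (because they are at different heights) can project to crossing segments in the plane, and these crossings are extra vertices of the planar overlay that do not correspond to anything in $\A$. This blow-up can be quadratic even for $m=2$: take two convex polytopes whose lower hulls are long thin ``cylinders'' ruled in the $x$- and $y$-direction respectively; their lower hulls project to $n$ parallel strips perpendicular to $x$ and $n$ strips perpendicular to $y$, whose planar overlay has $\Theta(n^2)$ cells, while Lemma~\ref{lem:overlay} correctly gives only $O(n)$ for the 3D overlay. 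The same objection kills your slab-sweep variant: the vertical decomposition of $\A$ has the same uncontrolled projected complexity. So the $O(nm^2)$ space bound you need is not established, and this is precisely the place where, as you yourself flag, ``the most care'' is needed.

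The paper's fix is to never project the whole 3D overlay at once. Instead it partitions $\A$ into the $O(m)$ \emph{level surfaces} $\S_0,\S_1,\dots$ (where $\S_i$ is the upper boundary of the set of points whose downward ray crosses exactly $i$ of the $2m$ extended hull surfaces). Each $\S_i$ is $Z$-monotone, so projecting it to the $xy$-plane is a bijection and introduces no new features; and the $\S_i$'s together tile $\A$, so their \emph{total} complexity (and hence the total size of all $O(m)$ planar point-location structures) is $O(nm^2)$. A query then binary-searches over the $O(m)$ levels, doing one $O(\log n)$ planar point location per probe, giving $O(\log m\log n)$ time. This per-level projection trick is the missing ingredient in your argument, and it is what makes the $O(nm^2)$ space bound actually hold.
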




\begin{proof}

    We decompose each $\P_i$ into its upper and lower hull, $U_i$ and $L_i$. 
    We can then turn each $U_i$ into a convex surface as follows: consider the projection of $U_i$ onto the
    $XY$-plane which will yield a convex 2D polygon $\P'_i$. Then, for every
    boundary edge $e$ of $\P'_i$, we
    add a hyperplane $h_e$ that goes through $e$. $h_e$ is made ``almost'' vertical (i.e., its angle with the 
    $XY$-plane is chosen close enough to $\frac{\pi}{2}$) such that the adding
    $h_e$ to the upper hull, for
    every boundary edge $e$, creates a convex surface $U'_i$ which can be represented as a function $\R^2 \rightarrow \R^3$
    (i.e., for every point $(x,y) \in \R^2$, there exists a unique value of $z$ such that $(x,y,z)  \in U'_i$).
    We repeat a similar process with the lower envelopes to create another surface $L'_i$.

    To build the data structure, we overlay all the surfaces in and let $\A$ be the resulting overlay. 
    By \reflem{overlay}, the complexity of the overlay $\A$ is $O(nm^2)$. 

    We decompose the overlay $\A$ into $O(m)$
    ``surfaces'', i.e., $Z$-monotone two-dimensional manifolds, as follows:
    Fix a value $i$. 
    Consider the set of points $q \in \R^3$ that have level $i$, i.e., $R_i= \left\{ q |\; |\Diamond(S,q^{\downarrow})| = i \right\}$.
    Define a surface $\S_i$ as the upper boundary of the closure of $R_i$. 
    Observe that $\S_i$ is a subset of the overlay $\A$ and it will also be a $Z$-monotone surface. 
    Recall that each surface $U'_j$ or $L'_j$ can be
    thought of as a function $\R^2 \rightarrow \R$, i.e., a function that is defined everyone on $\R^2$.
    It thus follows that the level of point
    $p$ can only change when lies on the intersection of two surfaces. 
    Thus, the process of creating layers does not create additional edges or vertices, meaning,
    in total the surfaces $\S_i$ have complexity $O(n m^2)$.

    Consequently, each surface $\S_j$ can be triangulated and also projected to the $XY$-plane
    and stored in a point location data structure.
    The triangles resulting from the triangulation are the triangles stored by the data structure. 

    Now consider queries.
    Given a query $q$, we can find the surface $\S_i$ that lies directly above $q$ using binary search on the at most $2m$ levels.
    Each step of the binary search can be done via a point location query for a total query time of $O(\log m \log n)$.
    This finds the triangle $\tau$ in the overlay $\A$ that lies directly above $q$.
    By construction, $\tau$ is the triangle claimed by the lemma.
\end{proof}

For some applications, the above lemma is sufficient
but for our most general result, we need to remove the factor $\log m$ in the above query time. 

\begin{restatable}[advanced point location]{lemma}{advpl}\label{lem:advpl}
    Let $S =\{ \P_1, \cdots, \P_m\}$ be a set of $m$ convex polytopes in
    $\R^3$ of total complexity $n$,
    where $m = O(2^{\sqrt{\log n}})$.
    We can build a data structure that uses $O(n  2^{O(\sqrt{\log
    n})})$ space that does the following: 
    The data structure stores a set $T$ of  ``anchors'' (triangles) such that each anchor is either
    inside a polytope or lies completely outside it.
    Then, given  a query point $q$,
    it can output an anchor $\tau \in T$ such that $\Diamond(S,q)$ is equal
    to the subset of polytopes that contain $\tau$. 
    The query time is ${O}(\log n)$. 
\end{restatable}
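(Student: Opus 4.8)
The plan is to bootstrap the basic point location structure of \reflem{basicpl} recursively, so that the $O(\log m)$ factor from the binary search over levels is replaced by a constant number of ``fractional-cascading-like'' jumps, each of which localizes the query among roughly $m^{1/2}$ of the surfaces rather than all $m$ of them. Concretely, I would build a balanced recursion tree of depth $O(1)$ — in fact depth $2$ suffices once $m = O(2^{\sqrt{\log n}})$ — where the root groups the $m$ surfaces $\S_1, \dots, \S_m$ into $\sqrt{m}$ consecutive blocks of $\sqrt{m}$ levels each. At the root I store a ``coarse'' overlay built only from the $\sqrt{m}$ separating surfaces between consecutive blocks; by \reflem{overlay} this coarse overlay has complexity $O(n \cdot (\sqrt m)^2) = O(nm)$, and a single $2$D point location query into it (cost $O(\log n)$) identifies the unique block of $\sqrt m$ levels that contains the query point $q$'s level. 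Then for each of the $\sqrt m$ blocks I recursively build the same structure on the $\sqrt m$ surfaces inside it; after one more level of recursion the blocks have $O(1)$ surfaces and a single point location finishes the job. Each level of the recursion costs $O(\log n)$ query time and there are $O(1)$ levels, giving total query time $O(\log n)$. The anchor triangles are, as in \reflem{basicpl}, the triangles of the fully refined overlay at the leaves.

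The space bound is where the recursion depth matters. At the root the coarse overlay costs $O(nm)$; but inside a block I must recurse on $\sqrt m$ surfaces \emph{restricted to that block}, and naively the restriction of a surface to a block still has complexity $O(n)$ worst case, so one block costs $O(n \cdot (\sqrt m) \cdot (\sqrt m)) = O(nm)$ as well by \reflem{basicpl}, and summing over $\sqrt m$ blocks gives $O(nm^{3/2})$ at depth $1$ — already worse than the $O(nm^2)$ of the basic lemma, which is wrong. The fix is to recurse with the \emph{full} basic structure only once we have cut down to a sufficiently small group: pick the branching factor so that each level multiplies the space by a factor $2^{O(\sqrt{\log n})}$ and the number of levels is $O(1)$. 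With $m = O(2^{\sqrt{\log n}})$ this is automatic: a single application of \reflem{basicpl} to all $m$ surfaces already uses $O(nm^2) = O(n\, 2^{O(\sqrt{\log n})})$ space and $O(\log n \log m) = O(\log n \cdot \sqrt{\log n})$ time, so I do not even need a nontrivial recursion for the space — I need it only to kill the $\sqrt{\log n}$ factor in the time. So the right design is: build the basic structure of \reflem{basicpl} on the top-level coarse overlay of $\sqrt m$ separators (space $O(nm)$, but with the \emph{shallower} binary search over only $\sqrt m$ levels, time $O(\log n \log \sqrt m) = \tfrac12 O(\log n \log m)$), recurse inside each block, and observe that the query time satisfies $Q(m) = Q(\sqrt m) + O(\log n)$, which unrolls to $O(\log n \cdot \log\log m) = O(\log n)$ since $\log m = O(\sqrt{\log n})$ forces $\log\log m = O(\log\log n) = O(\text{const after the } \sqrt{\log n}\text{ cap absorbs it})$ — more carefully, $\log\log m \le \tfrac12 \log\log n + O(1)$, and iterating the $\sqrt{\cdot}$ recursion $O(\log\log m)$ times yields a geometric sum $O(\log n)$.

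The key steps, in order, are: (1) reuse the surface decomposition $\S_1, \dots, \S_{2m}$ and the complexity bound $O(nm^2)$ from the proof of \reflem{basicpl}; (2) define the block partition of the levels and the coarse overlay of the $\sqrt m$ block-separating surfaces, bounding its complexity by $O(nm)$ via \reflem{overlay}; (3) verify that a point location query in the coarse overlay correctly and in $O(\log n)$ time pins down the block containing $q$'s level, using $Z$-monotonicity of the $\S_i$; (4) recurse inside the identified block on its $\sqrt m$ surfaces, restricted to the slab between the two separators, noting the restriction is still a collection of $Z$-monotone convex surfaces so \reflem{overlay} still applies; (5) bound the total space by $\sum_{\text{levels}} O(n \cdot (\text{branching})^2) = O(n\, 2^{O(\sqrt{\log n})})$ using $m = O(2^{\sqrt{\log n}})$; and (6) solve the recurrence $Q(m) = Q(\sqrt m) + O(\log n)$ to get $O(\log n)$. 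The main obstacle I expect is step (4): making precise that the restriction of the level surfaces to a horizontal ``level-slab'' between two consecutive block separators behaves well enough to re-apply \reflem{overlay} and \reflem{basicpl} — in particular that the anchor triangles produced at different recursion levels are mutually consistent (each is entirely inside or entirely outside every polytope) and that the final anchor returned at a leaf has exactly the conflict list of $q$. Handling the polytopes whose boundary passes through the slab but which do not contribute a separator, and confirming that the ``almost vertical'' side walls $h_e$ introduced in \reflem{basicpl} do not break $Z$-monotonicity of the restricted surfaces, is the delicate bookkeeping I would need to get exactly right.
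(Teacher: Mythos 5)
Your proposal takes a fundamentally different route from the paper's, and it has a genuine gap in the query-time analysis that I don't see how to repair within your framework.

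The gap. Your recurrence $Q(m)=Q(\sqrt m)+O(\log n)$ does \emph{not} solve to $O(\log n)$. Each recursive call costs a full $O(\log n)$ point-location step (the overlay at every level of the recursion can still have complexity polynomial in $n$, so the point location is not cheaper deeper down), and the number of levels before $\sqrt[2^k]{m}$ drops to $O(1)$ is $\Theta(\log\log m)$. With $m=2^{\Theta(\sqrt{\log n})}$ this is $\Theta(\log\log n)$ levels, giving total query time $\Theta(\log n\log\log n)$, not $O(\log n)$. The per-level costs are equal, not geometrically decreasing, so there is no geometric sum here. The earlier claim that ``depth $2$ suffices'' is also wrong: after two rounds of $\sqrt{\cdot}$ you still have $m^{1/4}=2^{\Theta(\sqrt{\log n})/4}$ surfaces per block, far from $O(1)$. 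A secondary issue: your appeal to \reflem{overlay} to bound the coarse overlay by $O(n(\sqrt m)^2)$ is not justified, since the separating level surfaces $\S_i$ are $Z$-monotone pieces of the full overlay but are not themselves convex polytopes, so \reflem{overlay} does not apply to them directly.

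What the paper does differently, and why it matters. The paper does not decompose by levels at all. Instead it builds a \emph{randomized Dobkin--Kirkpatrick hierarchy}: it repeatedly simplifies each polytope (inner and outer RDK-simplifications, deleting independent features with probability $1/2$), grouping simplifications into ``rounds'' of $T=\sqrt{\log n}$ steps each, and forming overlays $\A_0,\A_1,\ldots,\A_x$ with $x=O(\log n/T)=O(\sqrt{\log n})$. The key structural fact (\reflem{basicprop} together with \reflem{int}) is that each cell of $\A_j$ intersects only $2^{O(T)}$ cells of $\A_{j-1}$, because after $T$ random simplification steps each polytope has, with high probability, only $2^{O(T)}$ of its features surviving inside any fixed cell. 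This makes the \emph{per-level transition} cost only $O(T)=O(\sqrt{\log n})$ (via a cutting-based fast point-location structure on those $2^{O(T)}$ hyperplanes), so the total query time is (number of rounds) $\times$ (per-round cost) $= O(\sqrt{\log n})\cdot O(\sqrt{\log n}) = O(\log n)$. Your scheme has no analogue of \reflem{int}: nothing bounds how many fine cells a coarse cell can meet, so each transition must fall back to a fresh $O(\log n)$ point location. That is exactly the term you cannot afford. If you want to salvage a ``level-based'' recursion, you would have to establish a bounded-interaction property between consecutive levels — which is precisely the extra idea the paper introduces via the randomized DK hierarchy.
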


The rest of this subsection is devoted to the proof of \reflem{advpl}.

\subsubsection{Preliminaries}

\mypara{A modified Dobkin-Kirkpatrick hierarchy.}
First, we describe a process of simplifying polytopes in 3D which is a modified version of 
building Dobkin-Kirkpartrick hierarchies~\cite{DK} but in a randomized way. 
We refer to the standard Dobkin-Kirkpartrick simplification as DK-simplification and to our
modified randomized simplification as RDK-simplification. 
Let $\P$ be a convex polytope in 3D. 
An \textit{outer simplification} of $\P$ is obtained by considering $\P$ as the intersection of halfspaces that bound
its faces. 
Then, we select a subset $I$ of independent faces of $\P$ (i.e., no two adjacent faces are selected) and such that each face has 
$O(1)$ complexity and $|I| = \Omega(|\P|)$.
In DK-simplification, $I$ is simply deleted, 
instead in RDK-simplification, we delete 
each element of $I$ with probability 0.5, independently. 
Observe that the deletion of $I'$ is equivalent to attaching $|I'|$ disjoint convex cells of
constant complexity to the faces of $\P$.
To define an \textit{inner simplification} of $\P$ we do a similar process but
this time we consider $\P$ as the convex of hull of its vertices and
select an independent set $I$ of the vertices of $\P$ of size $\Omega(|\P|)$ such that each vertex has constant degree.
In the standard DK-simplification, $I$ is simply deleted, 
but in RDK-simplification, we delete each vertex in $I$  with probability 0.5, independently. 
This creates a convex polytope that is contained inside $\P$ (or shares some boundaries).
This process can be thought of as removing disjoint convex cells of
constant complexity from $\P$. See Figure~\ref{fig:simp} for an illustration.
 
\begin{figure}[ht]
    \centering
    \includegraphics{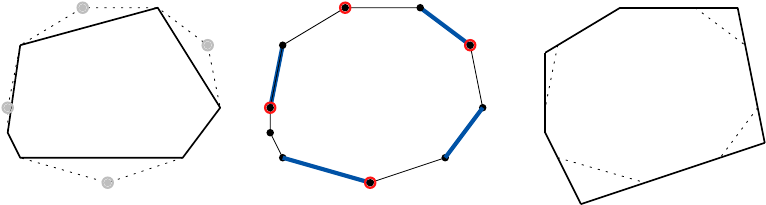}
    \caption{The inner and outer simplifications are shown in 2D, for clarity.
    The central polytope is being simplified here. To the left, we have  its inner simplification after
deleting the vertices marked with red and to the right, we have the outer simplification after deleting
the edges marked with blue. }
    \label{fig:simp}
\end{figure}

We can repeatedly apply DK- or RDK-simplifications until the inner polytope is reduced to a simplex and the outer polytope to a halfspace.
This results in a hierarchy which is denoted by 
$\H(\P)$.  An easy observation that we will be using is the following.

\begin{ob}\label{ob:hierarchy}
    If every simplification reduces the complexity of $\P$ by a constant factor, 
    then, $\H(\P)$ consists of $O(\log n)$ convex polytopes, i.e., it is the overlay of
  $O(\log n)$ convex polytopes which decomposes $\R^3$ into $O(|\P|)$ convex cells of constant complexity. 
\end{ob}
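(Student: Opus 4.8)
The plan is to track the hierarchy level by level and to exploit that the ``increments'' between consecutive levels are pairwise disjoint convex cells of constant complexity, so that overlaying them creates nothing new.

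First I would count the polytopes. Write the inner simplifications as $\P = \P_0 \supseteq \P_1 \supseteq \cdots \supseteq \P_s$, with $\P_s$ a simplex, and the outer simplifications as $\P = \Q_0 \subseteq \Q_1 \subseteq \cdots \subseteq \Q_t$, with $\Q_t$ a halfspace. By the hypothesis there is a constant $\rho \in (0,1)$ with $|\P_{i+1}| \le \rho\,|\P_i|$ and $|\Q_{i+1}| \le \rho\,|\Q_i|$, so $s,t = O(\log|\P|) = O(\log n)$ and $\H(\P)$ is the overlay of the $s+t+1 = O(\log n)$ convex polytopes $\P_0,\dots,\P_s,\Q_1,\dots,\Q_t$.

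Next I would analyse the cell structure. From the description of a single RDK- (or DK-) simplification, the region $\P_i \setminus \P_{i+1}$ is the disjoint union of the convex cells removed at step $i$ (one per deleted vertex), each of constant complexity, so there are $O(|\P_i|)$ of them; symmetrically $\Q_{i+1}\setminus\Q_i$ is a disjoint union of $O(|\Q_i|)$ constant-complexity convex cells (one per added face). The key structural observation is that all these cell families, together with the simplex $\P_s$ and the halfspace complement $\R^3 \setminus \Q_t$, partition $\R^3$:
\[
  \R^3 \;=\; (\R^3 \setminus \Q_t)\;\cup\;\bigcup_{i=0}^{t-1}(\Q_{i+1}\setminus\Q_i)\;\cup\;\P_s\;\cup\;\bigcup_{i=0}^{s-1}(\P_i\setminus\P_{i+1}),
\]
and that each piece $C$ of this partition is bounded only by faces of the two consecutive levels that created it, i.e.\ $\partial C \subseteq \partial\P_i\cup\partial\P_{i+1}$ (resp.\ $\partial\Q_i\cup\partial\Q_{i+1}$). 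Using convexity and the nesting of the levels, the interior of $C$ lies strictly inside every polytope that contains it and strictly outside every polytope that excludes it, so no other polytope boundary crosses $\mathrm{int}(C)$; hence overlaying all $O(\log n)$ polytopes does not subdivide any $C$, and every cell of $\H(\P)$ is exactly one of these constant-complexity convex cells.

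Finally I would sum up: the number of cells of $\H(\P)$ is $O(1) + \sum_{i=0}^{s-1}O(|\P_i|) + \sum_{i=0}^{t-1}O(|\Q_i|)$, and since the sequences $|\P_i|$ and $|\Q_i|$ decrease geometrically from $|\P|$, both sums telescope to $O(|\P|)$, which gives the claim. I expect the main obstacle to be precisely the disjointness / ``no further subdivision'' argument in the previous paragraph: it is what simultaneously keeps the total cell count at $O(|\P|)$ rather than $O(|\P|\log n)$ and keeps every cell of constant complexity, so it must be justified carefully from convexity and the level nesting rather than waved through.
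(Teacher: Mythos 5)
The paper states Observation~\ref{ob:hierarchy} without proof (it is flagged as ``an easy observation''), so there is no paper argument to compare against. Your proof is correct and is the natural reconstruction of the standard Dobkin--Kirkpatrick analysis: the $O(\log n)$ bound on the number of levels follows from the geometric decay, and the $O(|\P|)$ bound on cells follows from partitioning $\R^3$ into the per-level increments (plus the innermost simplex and the outer halfspace complement) and using the nesting $\P_s \subseteq \cdots \subseteq \P_0 = \Q_0 \subseteq \cdots \subseteq \Q_t$ to argue that no other level's boundary crosses the interior of any increment piece, so the overlay does not subdivide them further. You correctly flagged this ``no further subdivision'' step as the crux; it is what keeps the total at $O(|\P|)$ rather than $O(|\P|\log n)$.

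One point that deserves slightly more care, though it is shared with the paper's own informal description of the RDK process: the region cut off by deleting a single constant-degree vertex $v$, namely the connected component of $\P_i \setminus \P_{i+1}$ near $v$, is $\P_i$ intersected with the \emph{union} of the halfspace complements of the $O(1)$ new faces filling the hole at $v$, and this region need not itself be convex. What is true, and suffices, is that it has constant complexity and can be split into $O(1)$ constant-complexity convex subcells; since no other level boundary crosses it, this gives the claimed $O(|\P|)$ convex cells without altering the counting. With that small refinement your argument is complete.
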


\mypara{A simplification round.} 
Let $T$ be a parameter that will be defined later ($T$ will actually be set to $\sqrt{\log n}$).
Given a convex polytope $\P$, \textit{one round of simplification of $\P$} is defined as follows.
If the complexity of $\P$ is at most $2^T$, then $\P$ is simplified via 
DK-simplification, reducing the complexity of $\P$ by a constant factor at each
simplification to yield a hierarchy of $O(\log |\P|) = O(T)$ simplifications.
Otherwise,  starting from $\P$, we repeatedly apply inner (resp. outer) RDK-simplification $T$ times to 
obtain $T$ progressively smaller  (resp. larger) polytopes. 
The smallest and the largest polytopes that are obtained are called the \textit{boundary} polytopes. 
Based on this, we perform the following steps for every polytope $\P_i$ in the input.

\begin{enumerate}
    \item Initialize a list $X = \{ \P_i \}$ and the set $S = \{\P_i\}$ of simplifications of $\P_i$.
        Each element of $S$ will maintain a \textit{counter} that counts how many rounds it has been simplified for, and
        for $\P_i$ this counter is initialized to zero. 
    \item While $X$ is not empty:
\begin{enumerate}
    \item Remove the first polytope $\P$ from $X$ and let $c$ be the counter of $\P$.
    \item Simplify $\P$ for one round, resulting in $O(T)$ simplifications which
        are added to $S$; all of these simplifications have their counter set to $c+1$.
      \item  If any boundary polytope is produced (i.e., if $\P$ had complexity at least $2^T$),  
          then they are added to $X$. 
\end{enumerate}
\end{enumerate}

We also define a subdivision $\A_j$ is as the overlay of all the
polytopes (over all sets $S$) whose counter is at least $j$.
The following lemma captures some of the important properties of our construction. 

\begin{lemma}\label{lem:basicprop}
    Assuming $T \ge c+\log\log n$ for a large enough constant $c$,
    the following properties hold with probability at least $1-n^2$  (whp). 

    \begin{enumerate}[(I)]
    \item Let $x$ be the maximum value of the counters of the polytopes in $\A_0$. 
      We have   $x = O\left( \frac{\log n}{T} \right)$ (whp). 
    \item Let $M$ be the number of polytopes in $\A_0$. We have $M=O(m
      2^{x} T)$. 
    \item The size of $\A_j$ is $O(n M^2)$, for all $j$.
    \item $\A_j$ decomposes $\R^3$ into convex cells of complexity $O(M)$.
    \item Let $\Delta$ be a cell in $\A_j$ and let $\P$ be a polytope in $\A_{j-1}$. 
        Then, the number of vertices of $\P$ that are contained in 
        $\Delta$ as well as the number of edges of $\P$ that intersect $\Delta$ is bounded by
        $O(2^{O(T)} \log n) = O\left( 2^{O(T)} \right)$. (whp).
    \item The size of $\A_x$ is $O\left( \left( m2^xT 2^T \right)^{O(1)} \right)$.


\end{enumerate}

\end{lemma}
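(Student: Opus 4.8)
The plan is to establish the six properties essentially in the order listed, since each depends on the previous ones. The crux of the whole lemma is property (I): bounding the maximum counter value $x$ by $O(\log n / T)$ with high probability. The key intuition is that each RDK-simplification round, applied $T$ times, shrinks a polytope's complexity by a factor that is $2^{\Theta(T)}$ in expectation (each of the $T$ steps removes a constant fraction in expectation, so after $T$ steps we have shrunk by roughly $(1-\delta)^T$ geometrically — wait, that is only a constant-to-a-power, so I must be more careful). Actually the right way to think about it: boundary polytopes are only spawned when $\P$ has complexity at least $2^T$, and one round drives complexity down by a factor of $2^{\Omega(T)}$ (since we do $T$ independent halving-in-expectation steps; a Chernoff/martingale argument shows the complexity after $T$ steps is at most a $2^{-cT}$ fraction of the start, w.h.p., for some constant $c>0$). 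Hence along any chain of boundary polytopes the complexity drops by $2^{\Omega(T)}$ per round, so after $O((\log n)/T)$ rounds it has dropped below $2^T$ and no further boundary polytopes are produced. I would make this rigorous by a union bound over all $O(\text{poly}(n))$ polytopes ever created (there are not too many because the branching is controlled) and over all $O((\log n)/T)$ rounds, using $T \ge c + \LL n$ to absorb the $\log n$-sized union bound into the exponent; this is exactly where the hypothesis $T\ge c+\LL n$ is used.

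Given (I), property (II) is a direct counting argument: starting from $m$ input polytopes, each round of processing a polytope spawns at most $2$ new boundary polytopes to recurse on and $O(T)$ simplifications added to $S$; unrolling this for $x$ rounds of branching gives at most $m\cdot 2^{x}$ boundary polytopes, each contributing $O(T)$ simplifications to $\A_0$, for a total of $M = O(m 2^x T)$ polytopes. Property (III), the size of $\A_j$, follows immediately from \reflem{overlay}: $\A_j$ is the overlay of at most $M$ convex polytopes of total complexity $O(n)$ (the simplifications only shrink complexity), so its size is $O(nM^2)$; property (IV) is the same observation phrased cell-wise — \reflem{overlay}'s proof shows the overlay of $M$ convex polytopes decomposes space into convex cells, and a cell's complexity is bounded by how many of the $M$ surfaces it meets, which is $O(M)$.

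Property (V) is the second substantive point. Here I want to bound, for a cell $\Delta$ of $\A_j$ and a polytope $\P$ of $\A_{j-1}$, the number of vertices/edges of $\P$ local to $\Delta$. The idea is that $\P$ at ``level $j-1$'' differs from its parent and children at adjacent levels by only one simplification round, i.e.\ by $T$ RDK-steps, and between two consecutive $\A$-levels the overlay has been refined. More concretely: a cell of $\A_j$ is contained in a cell of some finer overlay that separates $\P$ from the polytope one round further simplified; within such a region the "active" part of $\P$ is one round's worth of detail, which is a polytope of complexity $2^{O(T)}$ blown up by at most a $\log n$ factor from the high-probability bounds, giving $O(2^{O(T)}\log n)=O(2^{O(T)})$ since $T\ge \LL n$. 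I expect this step to be the main obstacle: making precise the sense in which "one cell of $\A_j$ sees only one round of refinement of a polytope in $\A_{j-1}$" requires carefully tracking which simplifications land in which overlay level, and the probabilistic bound on the per-round complexity reduction (the same Chernoff-type estimate from (I)) has to be applied locally rather than globally. Finally, property (VI) is pure arithmetic: $\A_x$ is the overlay of only the innermost polytopes — those with counter exactly $x$, which are the ones that fell below complexity $2^T$ — and there are $O(m2^xT)$ of them each of complexity $O(2^T)$, so by \reflem{overlay} its size is $O((m2^xT)^2 \cdot 2^T \cdot (m 2^x T)) = O((m2^xT2^T)^{O(1)})$, plugging into the $O(nM^2)$-type bound with $n$ replaced by the total complexity $O(m2^xT\cdot 2^T)$ at this level. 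Throughout, all statements are "whp", so I would conclude with a single union bound collecting the $O(\text{poly}(n))$ bad events, each of probability $n^{-\Omega(1)}$, which the choice of the constant $c$ in $T\ge c+\LL n$ makes at most $n^{-2}$ as claimed.
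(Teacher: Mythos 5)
Parts (I), (II), (III), and (VI) of your proposal track the paper's proof closely: (I) uses the same Chernoff argument on the independent set (the paper phrases it as ``each RDK step shrinks complexity by a constant factor w.h.p., so at most $O(\log n)$ steps $= O(\log n / T)$ rounds''; your ``each round shrinks by $2^{\Omega(T)}$'' is the same count), both invoke $T\ge c+\log\log n$ to make the Chernoff tail small enough; (II) is the same doubling-per-round count; (III) and (VI) are the same appeal to \reflem{overlay}.

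For (IV), however, your justification has a genuine gap. You write that ``a cell's complexity is bounded by how many of the $M$ surfaces it meets, which is $O(M)$.'' That is not true for the overlay of $M$ arbitrary convex polytopes: a single convex polytope boundary can contribute many faces, edges and vertices to a single cell (and the cells of a raw boundary overlay need not even be convex). The reason cells of $\A_j$ are convex and have complexity $O(M)$ is structural, and the paper's argument uses it in an essential way: because $\A_j$ contains \emph{all} further simplifications (counter $\ge j$) of every polytope $\P$ it contains, it contains an entire DK-hierarchy $\H(\P)$ for each such $\P$; by \refob{hierarchy}, each $\H(\P)$ already decomposes $\R^3$ into \emph{constant}-complexity convex cells, and a cell of $\A_j$ is the intersection of one such constant-complexity cell from each of the $\le M$ hierarchies, hence convex of complexity $O(M)$. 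Your proposal does not use \refob{hierarchy} at all, and without it the $O(M)$ bound (and convexity) does not follow.

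For (V), you flag yourself that your ``one cell of $\A_j$ sees only one round's worth of detail'' is vague, and indeed your forward (``locally track one round of refinement'') route is not what the paper does. The paper argues by contradiction and runs the randomness \emph{forward from $\P$ into $\A_j$}: if $\Delta$ were crossed by $Y = \Omega(8^T\log n)$ edges of $\P$, then since each RDK step keeps any fixed edge with probability at least $1/4$, a Chernoff bound shows that after the $T$ steps of one round a $\Omega(1/8^T)$ fraction of those edges survives w.h.p.\ into the boundary polytope that enters $\A_j$; but $\Delta$, being a cell of $\A_j$, cannot be crossed by any edge of a polytope in $\A_j$, contradiction. This cleanly ties the bound to the definition of $\A_j$ without needing to ``separate $\P$ from its one-round-finer version'' as you suggest. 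I'd recommend replacing the forward-tracking sketch with this contradiction argument; as written, your (V) is not a proof.
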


\begin{proof}
  For claim (I),  observe that if a polytope $\P \in X$ has size $2^T$, then it is 
  simplified with a standard DK-simplification and it adds no additional polytopes to $X$.
  Thus, it suffices to consider polytopes that have complexity at least $2^T\ge \log^c n$.
  Now the claim (I) follows from a standard Chernoff bound:
  Since in our RDK-simplification, we will select an independent set $I$ of size $\Omega(|\P|) = \Omega(c\log n)$ and on expectation, we will delete half of elements but since 
  $\E[|I|] \ge c\log n$, it follows that with 
  with high probability at least $|I|/4$ of the elements of $I$ are deleted.
  This reduces the complexity of polytope by a constant fraction.
  Consequently, the number of RDK-simplifications steps an input polytope can go through
  is bounded by $O(\log n)$ and thus the number of rounds is
  $O(\frac{\log  n}{T})$ which implies claim (I).

  Claim (II) follows from the observation that during each round, each
  polytope creates two additional polytopes (whose counters are
  incremented by one).  In other words, the number of polytopes of
  counter $c+1$ is at most double the number of polytopes with counter
  $c$. Combined with (I) this implies that the total number of
  polytopes ever added to $X$ is $O(m 2^x)$. For each such a polytope
  (whose counter is at least $j$) the overlay $\A_j$ includes $T$
  additional polytopes in $S$.

  Claim (III) is a consequence of \reflem{overlay} and (II). 

   To see (IV), consider one polytope $\P$ in $\A_j$.
   Observe that all the inner and outer simplifications of $\P$ are also included in $\A_j$ this is because our simplification process continues until the
   inner simplification of $\P$ is reduced to a simplex and the outer simplification of $\P$ is reduced to a halfspace.
   All of these simplifications have a counter value that is at least $j$ and thus by definition of $\A_j$ they are included in $\A_j$.
   Thus, some DK-hierarchy, $\H(\P)$, of $\P$ is included in $\A_j$.
   By \refob{hierarchy}, $\H(\P)$ decomposes $\R^3$ into cells of constant complexity.
   The number of such hierarchies that are included in $\A_j$ is upper bounded by the number of 
   polytopes, $M$.
   Each cell $\Delta$ of $\A_j$ is thus the intersection of at most $M$ convex cells of constant complexity and thus
   $\Delta$ will have complexity at most $O(M)$.

   Now consider claim (V).
   Observe that if $\P$ is included in $\A_j$, then $\Delta$ cannot contain any vertex of $\P$ or intersect any of its
   edges. 
   Thus, we can assume $\P$ is not included in $\A_j$.
   But this also implies that the counter of $\P$ must have value exactly $j-1$. 
   At some point $\P$ will be considered during our simplification process.
   Assume that $\Delta$ is intersected by $Y$ edges of $\P$ and $Y=\Omega(2^T \log n)$.
   This implies that $\P$ must have undergone our RDK-simplification process.
   Consider one step of our randomized DK-simplification:  regardless of whether we are dealing with inner or outer simplification and regardless of which
   independent set is chosen, each of these $Y$ edges are kept with probability at least $\frac 14$. 
   By a standard Chernoff bound, it follows that with high probability, at least $\frac 18$ fraction of them are kept. 
   $\P$ undergoes at most $T$ such simplification steps until its counter is incremented by one and the resulting boundary polytopes are added to $\A_j$. 
   This implies that with high probability, the resulting boundary polytopes have at least $\frac{Y}{8^T}$ edges intersecting $\Delta$.
   If $Y\ge c 8^T \log n$, this is a contradiction and thus the claim follows. 
   A similar argument applies to the vertices of $\P$ contained in $\Delta$.

   Finally, the claim (VI) follows by the observation that at the last round, all polytopes
   must have complexity at most $2^T$ and there can be at most $m2^xT$ of them by claim (II) and
   now we plug these values in \reflem{overlay} to prove this claim.
\end{proof}

The next lemma enables us to give an upper bound on the number of cells of $\A_{j-1}$ that can intersect the cells of $\A_j$.
This is a very crucial part of our point location data structure but the proof is relatively technical.

\begin{restatable}{lemma}{lemint}\label{lem:int}
  Let $A$ and $B$ be the overlay of at most $L$ convex polytopes each 
  such that they form a decomposition of $\R^3$ into convex cells of complexity at most $\delta$. 
  In addition, assume that for every cell $\Delta$ in $B$, and every polytope $\P$ in $A$,
  the number of vertices of $\P$ that are contained in $\Delta$ and the number of edges of $\P$ that
  intersect $\Delta$ are bounded by $X$.

  Then, every cell $\Delta$ of $B$ is intersected by at most $O(L^3 X
  + L^3 \delta + L^2\delta^2 + L\delta^3 )$ cells of $A$.
\end{restatable}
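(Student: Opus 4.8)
The key idea is that cell $\Delta$ of $B$ is the intersection of at most $L$ convex cells of constant complexity coming from the $L$ polytopes whose overlay is $B$; dually, its boundary $\partial\Delta$ consists of at most $\delta$ vertices, edges, and facets. A cell of $A$ that intersects $\Delta$ either is contained entirely in $\Delta$ or meets $\partial\Delta$. A cell of $A$ contained in $\Delta$ is one of the connected components of $\R^3$ cut out by the restriction of (the boundaries of) the $\le L$ polytopes of $A$ to the interior of $\Delta$; a cell of $A$ meeting $\partial\Delta$ is ``pierced'' by a vertex, edge, or facet of $\Delta$. So I would split the count into: (a) cells of $A$ that touch the boundary of $\Delta$, and (b) cells of $A$ strictly inside $\Delta$.

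For part (a), $\partial\Delta$ has $O(\delta)$ features. Each vertex of $\Delta$ lies in at most one cell of $A$ (or $O(L)$ cells if it is on several boundaries of $A$-polytopes — but a vertex is pierced by at most $L$ cells of $A$, since each of the $L$ polytopes contributes a bounded number of incident cells). Each edge $e$ of $\Delta$ crosses the boundaries of the $A$-polytopes: $e$ is a segment, so it crosses each of the $L$ convex polytopes of $A$ in at most $2$ points, hence $e$ passes through $O(L)$ cells of $A$; over the $O(\delta)$ edges this gives $O(L\delta)$ cells. Each facet $f$ of $\Delta$ is a convex polygon of constant complexity; the $L$ polytopes of $A$ cut $f$ into an arrangement of $O(L^2)$ cells (the lower-envelope / convexity argument of \reflem{overlay} applied within the plane of $f$: each polytope restricts to $O(1)$ convex pieces on $f$, and $L$ convex arcs/segments make $O(L^2)$ faces), so over $O(\delta)$ facets this gives $O(L^2\delta)$ cells meeting the two-dimensional part of $\partial\Delta$. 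Summing, part (a) contributes $O(L^2\delta)$ cells — dominated by the bound in the statement.

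For part (b), a cell of $A$ strictly inside $\Delta$ is a full cell of $A$ (not clipped by $\Delta$), so it is bounded only by vertices, edges, and facets of the $A$-polytopes that lie in $\Delta$. By hypothesis each polytope $\P$ of $A$ has at most $X$ vertices in $\Delta$ and at most $X$ edges crossing $\Delta$; I also need a bound on the facets of $\P$ meeting $\Delta$, but a facet meeting $\Delta$ either has a vertex in $\Delta$, or an edge crossing $\Delta$, or it fully separates $\Delta$ in which case the facet-plane cuts $\Delta$ — and only $O(\delta)$ such ``through-cutting'' facet-planes can occur per polytope for a distinct combinatorial effect? No: this needs care, because a facet-plane passing all the way through $\Delta$ without any incident vertex or edge of $\P$ inside $\Delta$ is exactly the case the hypothesis does \emph{not} directly control. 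Here I would use Euler's formula / the fact that $\partial\P$ is a planar graph: the portion of $\partial\P$ inside $\Delta$ is a planar subgraph; if it has $O(X)$ vertices and $O(X)$ edges crossing the boundary, then the number of facets of $\P$ with a \emph{piece} inside $\Delta$ is $O(X+\delta)$ (the ``$+\delta$'' absorbing facets that slice through $\Delta$ cleanly, of which there are at most $O(\delta)$ combinatorially distinct types since $\Delta$ has $O(\delta)$ features — actually each clean slice creates at least two new vertices on $\partial\Delta$, so there are $O(\delta)$ of them). So inside $\Delta$, the arrangement relevant to $A$ is built from $L$ convex polytopes each contributing a planar patch with $O(X+\delta)$ combinatorial complexity, i.e. $O(L(X+\delta))$ features total. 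The overlay of these — again by \reflem{overlay}, which bounds the overlay of convex polytopes of total complexity $N$ and count $L$ by $O(N L^2)$ — has complexity $O\bigl(L(X+\delta)\cdot L^2\bigr) = O(L^3 X + L^3\delta)$ cells. Hmm, but the statement also has $L\delta^3$ and $L^2\delta^2$ terms, which I'd expect to come from a more careful bookkeeping of the clipped pieces against the $O(\delta)$-complexity shell $\partial\Delta$ — essentially overlaying the patch-arrangement of $A$ (complexity $O(L(X+\delta))$, on $L$ convex pieces) against $\Delta$ itself (complexity $\delta$, one convex-ish region built from $L$ pieces), and invoking \reflem{overlay}-type bounds on the overlay of these $2L$ convex polytopes of total complexity $O(L(X+\delta)+\delta)$, giving $O\bigl((LX+L\delta+\delta)L^2\bigr)$ plus cross terms; matching the stated polynomial is then routine.

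**Main obstacle.** The delicate point is part (b): controlling facet-planes of an $A$-polytope that cut cleanly through $\Delta$ with no vertex or edge of that polytope inside $\Delta$ — the hypothesis bounds only vertices-in and edges-crossing, not such ``through'' facets. The fix is to charge each such facet-plane to the $O(\delta)$-complexity boundary $\partial\Delta$ (a clean cut must intersect $\partial\Delta$ in a closed curve, contributing new features to the planar arrangement on $\partial\Delta$), which is exactly why the final bound carries the $\delta$-heavy terms $L\delta^3, L^2\delta^2, L^3\delta$ alongside the $L^3X$ term. Once this charging is set up, the whole count reduces to a handful of applications of \reflem{overlay} to carefully-counted collections of convex polytopes, and the arithmetic is mechanical.
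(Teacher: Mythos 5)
Your plan is a genuinely different route from the paper's. The paper fixes a cell $\Delta$ of $B$ and performs an explicit case analysis on how the boundaries of $\Delta$ and of cells $\Delta'$ of $A$ can meet (an edge of $\Delta$ hitting $\Delta'$; a face of $\Delta'$ crossing a face of $\Delta$, with two sub-cases; $\Delta'$ nested inside $\Delta$, again with sub-cases), charging each configuration to edges or vertices of $\Delta$ or of $A$-polytopes and repeatedly using the fact that a line or segment crosses a convex polytope at most twice. Your plan instead tries to \emph{reduce} to Lemma~\ref{lem:overlay}: intersect each $A$-polytope with $\Delta$, bound the complexity of each clipped piece, and then bound the local overlay directly. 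In principle this is a cleaner and more modular argument: if each $\Q_i := \P_i \cap \Delta$ has complexity $O(X+\delta)$, then since a point of $\mathrm{int}(\Delta)$ lies in $\P_i$ iff it lies in $\Q_i$, the cells of $A$ meeting $\mathrm{int}(\Delta)$ inject into the cells of the overlay of $\Q_1,\dots,\Q_L$, and Lemma~\ref{lem:overlay} immediately gives $O\bigl(L(X+\delta)\cdot L^2\bigr) = O(L^3 X + L^3\delta)$ — which in fact subsumes the bound in the statement and would also absorb your part~(a). So the route is sound and arguably simpler than the paper's.

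That said, there are real gaps in the write-up. First, you identify the right obstacle — facets of $\P$ that slice cleanly through $\Delta$ with no incident vertex or edge of $\P$ inside — but your fix is not correct as stated. The claim that there are ``$O(\delta)$ of them'' because ``each clean slice creates at least two new vertices on $\partial\Delta$'' does not actually bound the count: different slicing facets can cross the same edges of $\Delta$, so no injective charging to $\partial\Delta$ features follows from what you wrote. The clean and correct argument uses convexity one more time: for convex $\P$ and convex $\Delta$, $\partial\P\cap\Delta$ is a single topological disk (the intersection of a convex sphere with a convex body), so it is \emph{connected}; hence at most one facet of $\P$ can contribute a flat piece not bordered by any edge of $\P$ inside $\Delta$. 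Combined with Euler's formula on the disk, this pins down the complexity of $\Q_i$ as $O(X+\delta)$, with the $\delta$ coming only from $\partial\Delta$. Second, in part~(b) you speak of overlaying ``patches'' of $\partial\P_i$ against ``$\Delta$ itself,'' but Lemma~\ref{lem:overlay} is about convex polytopes, not surface patches; you should overlay the clipped convex polytopes $\Q_i$ themselves. Third, the final sentence (``matching the stated polynomial is then routine'') leaves the key computation unverified; once the two fixes above are in place it really is a one-line application of Lemma~\ref{lem:overlay}, but as written the proposal stops short of a proof.
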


\begin{proof}

    \begin{figure}[h]
        \centering
        \includegraphics[scale=0.75]{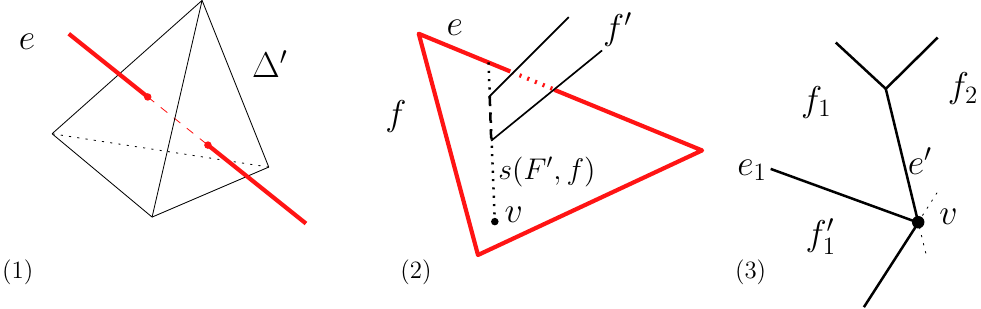}
        \caption{(1) An edge $e$ of $\Delta$ intersects  $\Delta'$.
            (2) A face $f'$ of $\Delta'$ intersects the boundary of $\Delta$ at a face $f$.
            (3) An edge of $\Delta'$ lies completely inside $\Delta$.  $e'$ is the intersection of
            two faces $f_1$ and $f_2$ of potentially two different polytopes. 
    }
        \label{fig:int}
    \end{figure}

  Consider a fixed cell $\Delta$ of $B$ and we count the number of cells $\Delta'$ that can intersect 
  $\Delta$.
  The cells $\Delta'$ and $\Delta$ can intersect at their boundaries
  or one could be contained in the other one. 
  We count the number of such cells $\Delta'$ using the following cases. 
  \begin{enumerate}
    \item An edge $e$ of $\Delta$  intersects $\Delta'$. Observe that every line can intersect each convex polytope
        in $A$ at most twice and thus $e$ can intersect $O(L)$ such cells. As $\Delta$ contains at most $\delta$ edges, 
        the number cells $\Delta'$ can be bounded by $O(L\delta)$.

    \item Now consider the case that a face $f'$  of $\Delta'$ intersects the boundary of $\Delta$, i.e. at a face $f$ of $\Delta$.
      Assume $f'$ lies on a face $F'$ of a polytope $\P'$ in $A$. 
      Let $s(F',f)$ be the line segment that is the intersection of the face $F'$ and the face $f$.
      We now consider two cases:
      (i) If $s(F',f)$ intersects the boundary of the face $f$ at an edge $e$ then it implies that an edge of $\Delta$ (the edge $e$) intersects
      $F'$. The number of such faces $F'$ is at most $O(L\delta)$ by the argument in the previous case and the segment $s(F',f)$ can also intersect at most
      $O(L)$ cells of $A$.
      Thus, the total number of cells $\Delta'$ that can fall in this case is $O(L^2 \delta)$ in total. 
      (ii) Now assume $s(F',f)$ does not intersect the boundary of $\Delta$ which implies it lies completely inside the face $f$ of $\Delta$, meaning,
      it has a vertex $v$ inside $f$.
      But in this case, $v$ is the result of the intersection of an edge of $\P'$ and $f$. 
      By assumptions, the number of such edges is at most $LX$, since there can be at most $X$ edges of every polytope $\P'$ in $A$ intersecting $\Delta$. 
      As before, each of such edge can intersect $O(L)$ cells of $B$ and thus in total the number of cells $\Delta'$ that fall in this case
      is bounded by $O(L^2X)$. 
      Adding up the total from cases (i) and (ii) we get that the number of cells $\Delta'$ in this case is bounded by
      $O(L^2X + L^2 \delta)$.

    \item
        The first two cases cover the situation when the boundaries of $\Delta$ and $\Delta'$ intersect.
        We now consider when one cell contains the other.
        Observe that if $\Delta'$ contains $\Delta$, then $\Delta'$
        can be the only cell that intersects $\Delta$ and thus
        the only case left is when $\Delta'$ is fully contained in $\Delta$.
      In this case, consider one edge of $\Delta'$. 
      This edge will lie on the intersection of two faces $f_1$ and $f_2$ that could potentially belong to different polytopes in $A$. 
      Let $e'$ be the maximal line segment the edge of $\Delta$ in which $f_1$ and $f_2$
      intersect. We have two sub cases:

      \begin{figure}[tb]
        \centering
        \includegraphics{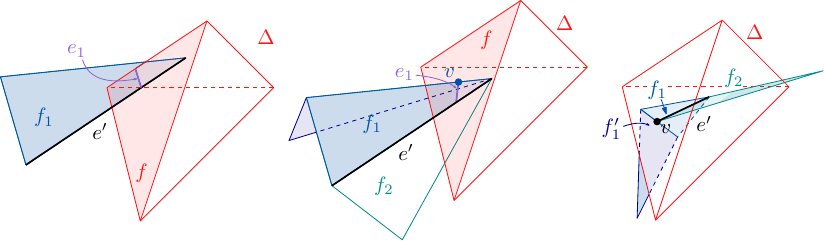}
        \caption{The subcases for case 3: if $e'$ intersects the boundary of $\Delta$ at a face
          $f$ then the situation is essentially the same as in case 2
          (left and middle). If $e'$ has an endpoint $v$ inside
          $\Delta$, it must lie on an edge of a polytope $\P_1$ (right). }
        \label{fig:case3}
      \end{figure}

      \begin{description}
      \item[sub-case: $e'$ intersects the boundary of $\Delta$ at a
        face $f$.] As in case 2, the face $f_1$ must
      intersect $f$, say in an edge $e_1$.

      If $e_1$ intersects an edge of $\Delta$, then it means that
      $f_1$ is of the $O(L\delta)$ faces intersected by
      the edges of $\Delta$ and thus the total number of edges $e_1$
      that belong to this sub-case is $O(L\delta)$. Each such edge
      $e_1$ can intersect $O(L)$ faces as $f_2$, meaning, there are
      $O(L^2\delta)$
      potential edges like $e'$ which in turn implies the total number
      of simplices $\Delta'$ that belong to this sub-case is
      $O(L^2\delta^2)$.

      If $e_1$ does not intersect an edge of $\Delta$, then, as in
      case 2(ii) above, an endpoint $v$ of $e_1$ lies on an edge of a
      polytope $\P'$. By our assumptions there are at most $XL$ such
      edges that intersect $\Delta$. Therefore, there are also at most
      $XL$ edges $e_1$, each of which intersects any other polytope at
      most twice, and thus there are at most $O(L)$ choices for face
      $f_2$. Hence, there are at most $O(XL^2)$ edges $e'$, each of
      which intersects at most $O(L)$ cells $\Delta$. Hence, the
      number of intersections between $\Delta'$ and $\Delta$ of this
      type is $O(XL^3)$.
      
      \item[sub-case: $e'$ does not intersect the boundary of
        $\Delta$.]       This implies that end point $v$ of $e'$ lies  fully inside $\Delta$.
      Now, observe that the faces $f_1$ and $f_2$ are faces of two polytopes $\P_1$ and $\P_2$ and thus the 
      end points of the line segment $e'$ will be the result of the intersection of three faces of $\P_1$ and $\P_2$, meaning, two of the faces must come from
      the same polytope.
      W.l.o.g., assume that $v$ is the result of the intersection of $f_1, f'_1$ and $f_2$ where $f_1$ and $f'_1$ both belong
      to $\P_1$.
      Let $e_1$ be the boundary edge that lies between $f_1$ and $f'_1$. 
      Observe that $e_1$ must either cross the boundary of $\Delta$ or it must have a vertex (of the polytope $\P_1$) inside $\Delta$.
      In either case, we have a bound of $X$ for the number of such edges $e_1$ with respect to polytope $\P_1$ and 
      $O(XL)$ in total. 
      Each such edge $e_1$ can determine $O(L)$ other faces $f_2$ and each such intersection
      can in turn intersect $O(L)$ cells by the same argument.
      Thus the total in this case would be $O(XL^3)$. \qedhere
    \end{description}
  \end{enumerate}
\end{proof}

\subsubsection{Proof of \reflem{advpl} (Advanced Point Location)}

Our data structure is as follows.
Set $T=\sqrt{\log n}$.
Observe that we have $2^T = 2^{\frac{\log n}{T}}$.
We build the previously mentioned overlays $\A_i$ for $i=1, \cdots, O(\log_T n)$.
Then, for each cell $\Delta$ in $\A_j$, we find all the cells $\Delta'$ in $\A_{j-1}$
that intersect $\Delta$ and then store the hyperplanes that define them in what we call a 
\textit{fast data structure}. Given a set of $\ell$ hyperplanes, such
a fast data structure can 
store them using $O(\ell^3)$ space such that point location queries
can be answered in $O(\log\ell)$ time~\cite{Chazelle.cutting}.
This concludes our data structure.
Our query algorithm is relatively simple: if we know the cell $\Delta$ in $\A_j$ that contains the
query point, then by querying the fast data structure built for $\Delta$, we can find the
cell $\Delta'$ in $\A_{j-1}$ that contains the query point and then we can continue until we
finally point locate the query.
We now bound the space and the query time of the data structure.

\mypara{The query time.}
By \reflem{basicprop} (VI), we can locate the query point $q$ in a cell of $\A_x$ 
in $O(\log m + x + \log T + T) = O(\sqrt{\log n})$ time.
Thus, assume, we know $q$ is in a cell $\Delta$ of $\A_j$.
By \reflem{basicprop}, each $\A_j$ will have total complexity $O(n 2^{O(T)})$
and is composed of $2^{O(T)}$ polytopes with cells of complexity $2^{O(T)}$.
By \reflem{int} (using $\delta, L = M, X$ all set to $2^{O(T)}$)
and \reflem{basicprop}(II,IV,V),  $\Delta$ will intersect at most $2^{O(T)}$ cells $\Delta'$, whp.
Note that we can assume this holds in the worst-case because if a cell
intersects more than this amount, then we can
simply rebuild the data structure and
the expected number of such trials is $O(1)$.
Thus, querying the fast data structure will take $O(T)$ time. 
Finally, observe that the depth of the recursion is $x=\frac{\log n}{T}$.
So we get the query time of $O(\log n)$.

\mypara{The space analysis.}
As mentioned each $\A_i$ will have complexity $O(n 2^{O(T)})$.
    For each cell we store a fast data structure that uses $2^{O(T)}$ space, leading to
    $O(n 2^{O(T)})$ total space usage.

\subsection{Solving LEIS3D}

In this section, we prove our main technical tool.

\begin{figure}[tb]
  \centering
  \includegraphics[scale=1.35]{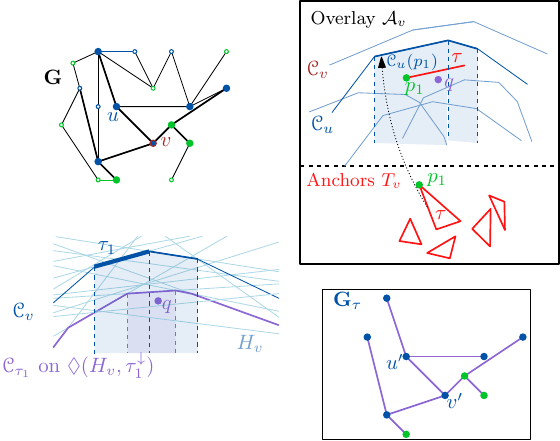}
  \caption{An overview of the data structure. Light nodes shown in
      green, heavy nodes in blue, and $\N_2(v)$ is marked with large disks. 
      Each triangle $\tau$ in the point
    location data structure on $\G_u$ stores a pointer to a copy of
    this neighborhood. From there, we can jump to cells in $\C_v$.}
  \label{fig:overview}
\end{figure}

\thmgraphsc*

\begin{proof}
    Note that when $\G$ is a general graph of maximum degree $d=O(1)$, we can assume that its maximum degree
    is 3 by replacing a vertex of degree $d'\le d$ with a binary tree of depth $\log d'$ which only
    blows up the query time by a constant factor.

    Recall that our data structure should store a $k$-shallow cutting
    of $H_v$ for each vertex $v \in \G$, and let $c$ be a large enough constant.
    When $\G$ is  a path we set $\ell= \log n \cdot \log\log n$ and $x= k (2\ell)^c$ and
    when $\G$ is a general graph, we set $\ell = \sqrt{\log n}$ and $x=k 2^{c \ell}$.
    We say a vertex $u \in \G$ is \textit{light} if $|H_v| \le x$, and
    \textit{heavy} otherwise.
    For a vertex $v$, its \textit{$\ell$-neighborhood}, denoted by $\N_\ell(v)$, is the set of all the vertices of $\G$ that have
    distance at most $\ell$ to $v$.

    \subparagraph{The data structure.}
    For every vertex $v$, we build a $x$-shallow cutting $\C_v$ for the set of hyperplanes $H_v$.
    If $v$ is heavy, $\C_v$ is built via \reflem{shallow} (together with the all the corresponding
    apparatus in the lemma) and thus $\C_v$ is assumed to be convex (i.e., the lower envelope of some hyperplanes).
    However, if $v$ is light, then $\C_v$ is assumed to consist of just one triangle $\tau$,
    the hyperplane $Z=+\infty$ as an infinite triangle and
    the conflict list of $\tau$ in this case is $H_v$.
    In either case, for every triangle $\tau$ in the shallow cutting $\C_v$, we compute and store the conflict
    list, $\Diamond(H_v,\tau^{\downarrow})$, and then we build a $k$-shallow
    cutting, $\C_\tau$, on $\Diamond(H_v,\tau^{\downarrow})$;
    we call $\C_\tau$ \textit{the secondary shallow cutting}.
    This is also built via \reflem{shallow}.
    Note that as $x \geq k$, these secondary shallow cuttings cover the $\leq k$-level of $H_v$.

    \subparagraph{The overlays.}
    Next, we consider each heavy vertex $v$:
    we consider all the vertices $u \in \N_\ell(v)$ and the shallow cuttings $\C_u$ as convex polytopes 
    and create a 3D point location data structure on the overlay all the convex polytopes $\C_u$, for all $u$.
    Let $\A_v$ be this overlay.
    When $\G$ is a path, $\N_\ell(v)$ contains at most $2\ell$ vertices and this overlay is built via \reflem{basicpl}.
    When $\G$ is a graph, it contains at most $2^{\ell+1}$ vertices and we use \reflem{advpl}.
    In either case, the data structure will store a number of triangles as anchors, denoted by $T_v$.
    For each anchor $\tau \in T_v$ we consider its corners (vertices) $p_1, p_2$ and $p_3$. 
    Then, consider every $u \in \N_\ell(v)$ and its shallow cutting $\C_u$;
    store a pointer from $\tau$ to the triangle, $\C_u(p_i)$, in 
    $\C_u$ that lies above $p_i$, for $1 \le i \le 3$.
    Observe that for the anchor $\tau$, we store at most $3\N_{\ell}(v)$ pointers (at most three per $u \in \N_\ell(v)$).
    These pointers are arranged in a particular way:  we make a copy of $\N_\ell(v)$, denoted by $\G_{\tau}$,
    where each vertex $u \in \N_\ell(v)$ has a copy $u' \in \G_{\tau}$ and we store the three pointers
    from $u'$ to $\C_u(p_i)$, for $1 \le i \le 3$.
    
    \subparagraph{The query.}
    Consider a query, consisting of a point $q \in \R^3$ and an initial
    vertex $u$ of the query subgraph $\bH$. 
    \begin{enumerate}
        \item If $u$ is light, $\C_u$ contains one triangle $\tau$ which is the plane $Z=+\infty$ and one cell which is
    the entire $\R^3$.
    We look at the secondary shallow cutting $\C_\tau$ and locate $\C_\tau(q)$ if it exists and
    return conflict list $\Diamond(H_u,\C_\tau(q)^{\downarrow})$ which by
    \reflem{shallow} contains $\Diamond(H_u,q^{\downarrow})$.

        \item 
    This process can be repeated until we encounter the first heavy vertex $u$.
    We query the 3D point location data structure (\reflem{advpl} for general graphs and
    \reflem{basicpl} for when $\G$ is a path) to get an anchor triangle $\tau$ with corners $p_1, p_2$ and $p_3$.
    We use the pointers associated with the copy $u'$ of $u$ in $\G_{\tau}$ to find 
    the triangle $\tau_i = \C_u(p_i)$, $1 \le i \le 3$. 
    The 3D point location data structure guarantees that the corners
    of $\tau$ all lie below $\C_u$, and thus all three of these triangles
    $\tau_i$ exist, if and only if $q$ lies below $\C_u$.

    We query the secondary shallow cuttings $\C_{\tau_i}$, $1 \le i \le 3$, and for each, we
    find $\C_{\tau_i}(q)$, if it exists and then report the pointer to the corresponding secondary 
    conflict list. 

    \item 
    
    Now assume the query is given a sequence of vertices of $\G$, $u_0=u, u_1, \cdots, u_m$, one by one,
    such that they all belong to $\N_\ell(u)$.
    Let $\tau$ be the triangle found in the previous step. 
    The crucial observation is that $\tau$ does not intersect any triangle in the shallow cuttings
    $\C_{w}$, for any $w \in \N_{\ell}(u)$ since $\tau$ was a triangle in the overlay of all the shallow cuttings.
    Thus, in this case, the anchor does not change.
    In this case, for each revealed vertex $u_j$, we simply need to follow the pointer from $u'_{j-1}$ to
    $u'_j$ in $\G_{u,\tau}$ and from $u'_j$  to $\C_{u_j}(p_i)$, for $1 \le i \le 3$.
    The rest of the algorithm is the same as Step 2. 
    \item 
    Eventually, the algorithm will reveal a vertex $u_{m+1}$ that is outside the $\N_\ell(u)$.
    In this case, we simply go to Step 1 or 2, i.e., we assume $u_{m+1}$
    is the initial vertex.
    \end{enumerate}

    \subparagraph{The query time.}
    Step 1 takes $O(\log(\frac{x}{k}))$ time by \reflem{shallow} since each conflict in $\C_v$ has size 
    $O(x)$ and we have a $k$-shallow cutting for them. 
    In Step 2, we use $Q=O(\log n \log\log n)$ time for when $\G$ is a path by \reflem{basicpl} and
    $Q=O(\log n)$ time for a general graph to identify the anchor $\tau$.
    Obtaining the triangles $\C_u(p_i)$ for $1 \le i \le 3$ is a constant time operation and then
    we again query the secondary shallow cuttings which takes $O(\log(\frac{x}{k}))$ time. 
    In Step 3, the query time is only $O(\log(\frac{x}{k}))$ because we do not need to find the anchor
    $\tau$, and the cost is simply the cost of querying the secondary shallow cutting. 
    Finally, observe that whenever we reach Step 2, at least $\ell$ other vertices of $\bH$ must
    be revealed until we exceed $\N_\ell(u)$.
    In other words, Step 2 is only repeated once $\ell$ additional vertices of $\bH$ have been revealed. 
    Summing this up, we get that the query time is asymptotically bounded by 
    \[
        Q+ \log\left(\frac{x}{k}\right)|\bH| + \frac{|\bH|}{\ell} \cdot Q.
    \]
    By plugging in the corresponding parameters, when $\G$ is a path, the query time is
    \[
        O(\log n \log\log n + |\bH|\log\log n)
    \]
    and for a general graph $\G$ it is 
\[
    O(\log n + |\bH|\sqrt{\log n}).
\]

\subparagraph{The space analysis.}
The total size of the shallow cuttings $\C_v$ and the secondary shallow cuttings is $O(n)$.
Let $M$ be the maximum size of $\N_\ell(v)$,
i.e., for a path $\G$ we have $M=2\ell$ and for a general graph of maximum degree three we have
$M=2^\ell$.
Observe that the parameter $x$ is at least $M^c$ in both cases, for some constant $c$ that we can choose. 
The number of heavy vertices is at most $\frac{n}{x}$ and each heavy vertex can appear in the $\ell$-neighborhood of 
at most $M$ other vertices. 
Thus, the total number of overlays that we will create is at most
$\frac{nM}{x}$ and each overlay will contain at most
$M$ convex polytopes. 
By choosing $c$ large enough and by plugging the bounds from \reflem{basicpl} and \reflem{advpl}, we get that the total
space of the 3D point location can be bounded by $O\left( \frac{n}{M} \right)$ which is also an upper bound for the number
of anchors. 
Finally, for every anchor, we store at most $M$ pointers for a total of $O(n)$ space. 
\end{proof}


\section{Some Applications}
Here, we briefly mention quick applications of \refthm{graphsc}.
Note that by lifting map, similar problems for 2D disks can also be solved.

\begin{theorem}\label{thm:weightrep}
    Let $P$ be a set of $n$ points in 3D, each associated with a real-valued weight.
    We can store $P$ in a data structure that uses $O(n\log n)$ space
    such that given a query halfspace $h$ and two
    values $w_1$ and $w_2$, we can find all the points $P$ that lie inside $h$ and whose
    weight lies between $w_1$ and $w_2$ in $O(\log^{3/2}n + t)$ query time where
    $t$ is the size of the output. 
\end{theorem}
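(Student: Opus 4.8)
The plan is to combine a one–dimensional range tree over the weights with a single application of the LEIS3D machinery of \refthm{graphsc}, plus a light recursion to recover output sensitivity. First I build a balanced binary tree \T over the $n$ weights, so that each node $v\in\T$ owns the canonical subset $P_v\subseteq P$ of points whose weight lies in $v$'s range; then $\sum_v|P_v|=O(n\log n)$. Using the standard lifting/duality map, a point lies in the query halfspace $h$ iff the hyperplane dual to it passes below the point $q$ dual to $\partial h$; so if $H_v$ is the set of hyperplanes dual to $P_v$, then reporting $P_v\cap h$ is exactly reporting $\Diamond(H_v,q^{\downarrow})$. I take the catalog graph \G to be \T itself (a bounded–degree tree), with catalog $H_v$ at $v$, and build the LEIS3D structure of \refthm{graphsc} on it with parameter $k=\Theta(\log n)$; since the total catalog size is $O(n\log n)$, this costs $O(n\log n)$ space. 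At every node $v$ I additionally store (i) a classical linear–space, $O(\log n+t)$–time 3D halfspace reporting structure $R_v$ on $P_v$, and (ii) for every conflict list $L$ that a LEIS3D query at $v$ can return (each of size $O(k)=O(\log n)$), a classical linear–space $O(\log|L|+t)$ halfspace reporting structure built on the points dual to $L$. By \reflem{shallow} the total size of these conflict lists is $O(|H_v|)$, so all auxiliaries together still use $O(n\log n)$ space.

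To answer a query $(h,[w_1,w_2])$, I first compute from $w_1,w_2$ the relevant nodes of \T: the two root–to–leaf search paths together with the $O(\log n)$ canonical nodes hanging off them. These form a connected subtree that can be traversed as a walk $\bH$ with $|\bH|=O(\log n)$, which is exactly the input format of \refthm{graphsc} (the on–demand feature is not needed, since $\bH$ is known up front). I run the LEIS3D query for $q$ along $\bH$; this costs $O(\log n+|\bH|\sqrt{\log n})=O(\log^{3/2}n)$ and returns, for each canonical node $v$, either (a) $O(1)$ conflict lists whose union contains $\Diamond(H_v,q^{\downarrow})$, which also certifies $t_v:=|P_v\cap h|=O(\log n)$, or (b) that $q$ lies above the $k$–level of $H_v$, i.e.\ $t_v=\Omega(\log n)$. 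In case (a) I query the recursive halfspace reporting structure on each of the $O(1)$ returned conflict lists, costing $O(\log\log n+t_v)$; in case (b) I query $R_v$, costing $O(\log n+t_v)=O(t_v)$ because $t_v=\Omega(\log n)$. Summed over the $O(\log n)$ canonical nodes, case (a) contributes $O(\log n\cdot\log\log n)+O(t)$, and since at most $O(t/\log n)$ nodes fall in case (b) these contribute $O(t)$; adding the LEIS3D query time gives $O(\log^{3/2}n+t)$ overall, with $O(n\log n)$ space.

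The routine parts are the range–tree decomposition, the duality reduction, and the space accounting. The one place that needs care — and the reason $k$ cannot simply be $O(1)$ — is \textbf{output sensitivity under the space budget}: a bare LEIS3D query at level $k$ only ever returns conflict lists of size $\Theta(k)$, so filtering them directly would cost $\Theta(k)$ per canonical node and $\Theta(k\log n)$ in total, and we cannot afford $\Theta(\log n)$ separate LEIS3D instances (one per dyadic level) because each already costs $\Theta(n\log n)$ space. Fixing $k=\Theta(\log n)$ and recursing once on the returned conflict lists is exactly what drives the per–node cost of case (a) down to $O(\log\log n+t_v)$ while guaranteeing that the expensive fallback $R_v$ is triggered only when $t_v=\Omega(\log n)$ is large enough to absorb its $O(\log n)$ search cost. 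The main thing left to verify carefully is therefore this two–way split: that option (b) of \refthm{graphsc} really does force $t_v=\Omega(\log n)$ (which follows from \reflem{shallow}, since a $k$–shallow cutting lies strictly between the $k$– and the $\Theta(k)$–levels, so $q$ failing to lie below it means its level exceeds $k$), and dually that whenever $q$ does lie below that cutting the returned conflict lists genuinely cover $\Diamond(H_v,q^{\downarrow})$. Everything else is bookkeeping on the walk $\bH$ and on the disjoint subsets $P_v$.
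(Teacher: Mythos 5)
Your proposal is correct and matches the paper's proof in all essentials: a weight-ordered balanced binary tree as the catalog graph, duality to hyperplanes, a single LEIS3D query over the $O(\log n)$ canonical nodes to either obtain $O(1)$ small conflict lists or certify that the per-node output is large, auxiliary halfspace-reporting structures on the conflict lists, and a fallback to a full linear-space reporting structure per node. The only (harmless) difference is that you set $k=\Theta(\log n)$ where the paper takes $k=\log^2 n$; both choices yield conflict-list queries costing $O(\log\log n + t_v)$ per node and the same overall $O(\log^{3/2}n+t)$ bound, and your case-(b) accounting makes the dispatch between the two per-node options somewhat more explicit than the paper's.
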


\begin{proof}
    Use duality to get a set of $n$ hyperplanes and let $q$ be point
    dual to the query halfspace $h$.
    Store the hyperplanes  in a balanced binary tree $\G$, ordered by weight. 
    Every node $v$ of $\G$ defines a \textit{canonical set $H_v$} which is the subset of hyperplanes
    stored in the subtree of $v$.
    Store $H_v$ in a data structure of Afshani and Chan~\cite{ac09}.
    This uses $O(n \log n)$ space in total.
    Then, answering the query can be reduced to answering $O(\log n)$ 3D
    halfspace range reporting queries on two root to leaf paths in the tree.
    Simply querying the data structure of~\cite{ac09} gives $O(\log^2 n + t)$
    query time which is already optimal if $t \ge \log^2 n$.
    To improve the query time, it thus suffices to assume $t \le \log^2 n$.
    In which case, we use \refthm{graphsc}  with parameter $k=\log^2 n$ 
    and store each conflict list stored by~\refthm{graphsc} in a data structure of~\cite{ac09}. 
    This combination still uses linear space and when $t \le\log^2n$, it returns
    $O(1)$ conflict lists of size $O(\log^2 n)$ each.
    Querying the data structure of~\cite{ac09} on the conflist lists gives the claimed query time. 
\end{proof}

Note that offline halfspace range reporting queries can also be solved with a very similar technique.
We now consider 3D halfspace max queries.

\begin{theorem}
  \label{thm:weighted_hyperplanes_blelow}
  3D halfspace range max queries can be solved 
    $O(n\log n)$ space and $O(\log^{3/2}n)$ query time.
\end{theorem}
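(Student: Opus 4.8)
The plan is to reduce a 3D halfspace max query to a single \emph{on-demand} LEIS3D query with $k=0$ on a weight-ordered binary tree and then apply \refthm{graphsc}. First I would dualize, exactly as in the proof of \refthm{weightrep}: let $D$ be the set of $n$ hyperplanes dual to the points of $P$, each keeping the weight of its point, and let $q$ be the point dual to the query halfspace $h$, so that (after reflecting, if needed, to make $h$ a lower halfspace) the points of $P$ lying in $h$ correspond precisely to the hyperplanes of $D$ that pass below $q$. Hence the heaviest point in $h$ corresponds to the heaviest hyperplane of $D$ below $q$. I would then build a balanced binary tree $\G$ whose leaves are the hyperplanes of $D$ sorted by weight, let $H_v$ be the hyperplanes stored in the subtree of a node $v$, and store this catalog in the on-demand LEIS3D structure of \refthm{graphsc} with parameter $k=0$ (the special case noted after the problem statement, where a query need only locate the lower envelope). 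The catalog graph $\G$ has constant degree, and since every hyperplane lies in only $O(\log n)$ canonical sets the total catalog size is $N=\sum_v|H_v|=O(n\log n)$.

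Next I would phrase the query as an adaptive walk in $\G$. A $k=0$ LEIS3D query at a vertex $v$ reports whether $q$ lies above the $0$-level of $H_v$, that is, whether the subtree of $v$ contains a hyperplane below $q$, which is exactly the predicate a weight search needs. The walk starts at the root $r$: if $q$ lies below the lower envelope of $H_r$ then $h$ is empty and we stop. Otherwise we maintain a current node $v$, initially $r$, with the invariant that $H_v$ contains a hyperplane below $q$, and we repeatedly move to and query the heavier child $v_R$: if $q$ lies above the lower envelope of $H_{v_R}$ we descend to $v_R$; otherwise we backtrack through $v$ to the lighter child $v_L$ (whose subtree must then still contain a hyperplane below $q$) and make it current. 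When $v$ becomes a leaf, its single hyperplane lies below $q$ by the invariant, and we return the dual point. Every move is along an edge of $\G$, each answer determines the next move, and a backtrack adds only a constant-length detour, so $\bH$ is a walk of length $O(\log n)$; the extra vertices re-queried on backtracks are harmless since their answers only re-confirm the invariant.

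Finally I would read off the bounds. With $N=O(n\log n)$, \refthm{graphsc}(ii) uses $O(N)=O(n\log n)$ space and, on a walk of length $|\bH|=O(\log n)$ in a constant-degree catalog graph, runs in $O(\log N+|\bH|\sqrt{\log N})=O(\log n+\log n\cdot\sqrt{\log n})=O(\log^{3/2}n)$ time, which is the statement. I expect essentially all of the difficulty to lie inside \refthm{graphsc}; here the only things requiring care are the duality and orientation bookkeeping and verifying that the weight search is genuinely an \emph{on-demand} logarithmic-length walk in a bounded-degree catalog graph. For comparison, handling each of the $O(\log n)$ envelope tests by a separate planar point-location query only gives the folklore $O(\log^2 n)$ bound.
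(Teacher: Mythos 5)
Your proposal is correct and matches the paper's proof essentially step for step: dualize, build a weight-ordered balanced binary tree as the catalog graph, invoke \refthm{graphsc} with $k=0$, and do an on-demand binary search by testing whether $q$ lies above the lower envelope of the heavier child's canonical set. The only difference is that you spell out the backtracking detours needed to make $\bH$ a genuine walk, a detail the paper elides; this is a welcome clarification but not a different argument.
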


\begin{proof}
    Use duality and let $H=\{h_1,..,h_n\}$ be the dual input hyperplanes ordered by
  increasing weights, and let $w_i$ be the weight of $h_i$. We construct a
  balanced binary tree \G, whose leaves correspond to the hyperplanes
  $h_1,..,h_n$. For each node $v$, let $H_v$ denote the subset of
  hyperplanes stored in the leaves below $v$. We now build the data
  structure of \refthm{graphsc}, with parameter $k=0$.
  Total number of hyperplanes is $O(n\log n)$, the data structure uses $O(n\log n)$ space.

  Consider the query point $q \in \R^3$ dual to the query halfspace, and let $h_i$ be the (unknown)
  heaviest plane that passes below $q$. To find $h_i$, we construct a
  path \bH of length $O(\log n)$ in an on-demand manner.
  The main idea is as follows. At node $v$, with left child $\ell$ and right
  child $r$, we want to test if $q$ lies below the lower envelope (the
  $0$-level) of $H_r$. If so, then $h_i$ must lie in the left subtree
  rooted at $\ell$, hence we continue the search there. If $q$ lies
  above the lower envelope of $H_r$, we continue the search in the
  subtree rooted at $r$. The process finishes after we find the leaf
  containing $h_i$ in $O(\log n)$ rounds. This process generates a
  path \bH of length $O(\log n)$. Hence, the total query time is
  $O(\log^{3/2} n)$ as claimed.
\end{proof}

As discussed, this also improves the query time of the best range sampling data
structures for weighted points.  See~\cite{AP.range.sampling}.
It can also be combined with~\refthm{weightrep} to answer ``top-$k$'' reporting queries
of halfspaces in 3D~\cite{RahulT2022topk} as well as approximate counting queries~\cite{rahul2017approximate}.

\begin{theorem}
  \label{thm:colored-halfspace_reporting}
  Let $P$ be a set of points, each associated with a color from the set $[m]$.
  We can store $P$ in a data structure, s.t., given a query halfspace $h$, we can report the $t$ distinct colors of the points
  in $h$ in (i) $O(\log n \log\log + m \log\log n)$ time
  using $O(n)$ space or
  (ii) $O(\log n + t\log (m/t)\sqrt{\log n})$ time using $O(n \log m)$ space.
\end{theorem}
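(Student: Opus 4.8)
The plan is to reduce colored halfspace reporting to the LEIS3D machinery via a standard color-to-points transformation on a balanced tree over the colors, much in the spirit of the proof of \refthm{weightrep} and \refthm{weighted_hyperplanes_blelow}. Dualize so that the query halfspace becomes a query point $q$ and the input points become $n$ hyperplanes, each carrying a color in $[m]$. Build a balanced binary tree $\G$ over the color set $[m]$; for a node $v$ spanning a contiguous range of colors, let $H_v$ be the set of (dual) hyperplanes whose color lies in that range. The key fact is the usual one used for colored/categorical reporting: a color $c$ appears in the query halfspace if and only if, walking down from the root toward the leaf of $c$, at some node we see that $q$ lies below the lower envelope of $H_w$ for the child $w$ on the ``$c$-side'' (equivalently, some hyperplane of color $c$ passes below $q$), and this can be detected by testing, at each node, whether $q$ is below the $0$-level of the child's hyperplane set. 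Thus the set of distinct colors present can be enumerated by a branching search: at a node, recurse into a child precisely when that child's subtree contains at least one color present in $h$, which is exactly the event ``$q$ lies above the $k$-level'' (for $k=0$ this is ``$q$ above the lower envelope'') failing --- i.e. when $q$ is below the lower envelope, the subtree is ``alive.'' This search visits $O(t\log(m/t))$ nodes (the standard bound for the union of $t$ root-to-leaf paths in a balanced tree over $[m]$), and the collection of visited nodes forms a connected subgraph of $\G$, so it fits the on-demand LEIS3D model.

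For part (ii), I would instantiate \refthm{graphsc} with $k=0$ on this tree $\G$: total input size over all nodes is $O(n\log m)$, giving $O(n\log m)$ space, and since $\G$ has constant degree the query bound is $O(\log N + |\bH|\sqrt{\log N})$ with $N = O(n\log m)$ and $|\bH| = O(t\log(m/t))$, which is $O(\log n + t\log(m/t)\sqrt{\log n})$ after absorbing the $\log\log$ overhead of $\log(n\log m)$ into the constant (or simply noting $\log(n\log m) = O(\log n)$ for $m\le \mathrm{poly}(n)$; the general case needs the mild extra care of bounding $m\le n$ WLOG since only $\le n$ colors can occur). Each LEIS3D answer on a node $v$ either certifies $q$ is above the $0$-level of $H_v$ (so that subtree is dead --- prune) or returns a triangle below the lower envelope and above $q$ (so the subtree is alive --- continue, and when $v$ is a leaf, output its color). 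I should double-check that with $k=0$ the ``$c$ pointers to conflict lists'' degenerate exactly to ``return a triangle $\tau$ below the lower envelope and above $q$,'' which the remark after the LEIS3D problem statement explicitly grants.

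For part (i), the linear-space $O(\log n\log\log n + m\log\log n)$ bound, I would use the path version of \refthm{graphsc} rather than the tree version. The natural route: lay the $m$ colors out and, rather than a branching search over a tree, perform $m$ (or $O(m)$) on-demand LEIS3D queries arranged along a path so that the $|\bH|$ term accumulates to $O(m)$. Concretely, build a path $\G$ whose vertices are the $m$ colors (each $H_v$ = hyperplanes of that single color) and query $q$ against every vertex in order along the path; a vertex reports ``present'' iff $q$ is below the lower envelope of $H_v$. The on-demand path case of \refthm{graphsc}(i) gives $O(\log n\log\log n + m\log\log n)$ time and $O(n)$ space, and here $|\bH| = m$ since the whole path is walked. (The shallow-cutting parameter is again $k=0$; only a trivial ``is $q$ below the lower envelope'' test per vertex is needed, and the data structure supports this as the $k=0$ special case.) One subtlety is that a raw path over colors ignores any structure, which is fine for the bound claimed; if one wanted $O(\log n\log\log n + t\log(m/t)\log\log n)$ one would need the tree version with the path-optimized recursion, but the statement only asks for the $m\log\log n$ term, so the plain path suffices.

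The main obstacle I anticipate is purely bookkeeping rather than conceptual: verifying that the branching/walking pattern produced by the colored search is always a \emph{connected} subgraph (for the branching case) or a \emph{walk} (for the path case), as required by the on-demand model, and that revealing vertices in, say, a DFS order of the alive subtree indeed presents each new vertex adjacent to a previously revealed one --- a DFS on a tree naturally does this, with backtracking edges reusing already-revealed vertices, so $|\bH|$ counts distinct visited nodes, giving $O(t\log(m/t))$. A second, minor point to get right is the space accounting: the tree over $[m]$ replicates each hyperplane $O(\log m)$ times, hence $O(n\log m)$ space in (ii), whereas the path in (i) stores each hyperplane exactly once, hence $O(n)$; and in both cases one first reduces to $m \le n$ so that $\log$-of-total-size is $O(\log n)$.
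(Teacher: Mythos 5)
Your proposal follows essentially the same approach as the paper: part (i) uses a path of $m$ color-vertices with $k=0$ LEIS3D queries on the whole path, and part (ii) builds a balanced tree over the colors (the paper phrases it as a tree added atop the path), uses $k=0$, branches only into alive children, bounds the visited subtree by $O(t\log(m/t))$, and linearizes it into a walk via DFS. The bookkeeping points you flag (walk via DFS backtracking, $m\le n$ so $\log(n\log m)=O(\log n)$) are exactly the ones the paper handles implicitly, so no gap.
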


\begin{proof}
    Once again use duality and let $H$ be the of dual halfspaces. 
    Let $H_i$ be the hyperplanes of color $i$, $1 \le i \le m$. 
    For the first claim (i), simply use a path of length $m$ as graph $\G$
    and invoke \refthm{graphsc} where all query subgraphs $\bH$ are equal to $\G$.

    For the second claim (ii), add a balanced binary tree of height $\log m$ on the path
    and let $\G$ be the resulting graph. 
    For every node $v \in \G$, define $H_v$ to be the union of hyperplanes stored
    in the subtree of $v$. We now build the \refthm{graphsc} data structure
    with $k=0$ the graph \G. 
    The total size of all sets $H_v$, and thus of our data structure is $O(n\log m)$.
    Now, consider the query point $q \in \R^3$ dual to $h$. 
    Once again, we use the on-demand capability and at every node $v \in \G$ we
    decide whether either child of $v$ has any output color, and if so,
    we recurse into that child.
    Let $\bH$ denote the subtree of \G visited by this procedure. 
    By classical results, size of  $\bH$ is
  $O(t\log(n/t))$, when $t$ is the number of leaves visited (and thus
  the number of distinct colors reported). 
  Note that $\bH$ can be assumed to be a walk by simpling fully finishing the recursion on 
  each child of $v$, then backing up to $v$ and then potentially recursing on the other child.
  The total query time is $O(\log n + t\log(m/t)\sqrt{\log n})$.
\end{proof}

\bibliography{ref}
\bibliographystyle{plainurl}

\end{document}